\newtheorem{theorem}{Theorem}
\newtheorem{definition}{Definition}
\pgfplotsset{compat=newest}
\definecolor{idealcolor}{RGB}{31,119,180} 
\definecolor{LNNcolor}{RGB}{255,127,14} 
\definecolor{ZZcolor}{RGB}{44,160,44} 
\definecolor{EnolaDycolor}{RGB}{214,39,40} 
\definecolor{EnolaScolor}{RGB}{148,103,189} 
\definecolor{DasAtomcolor}{RGB}{128,128,128} 
\definecolor{Atomiquecolor}{RGB}{70,130,180} 
\definecolor{darkblue}{rgb}{0.0, 0.0, 0.55}
\definecolor{darkorange}{rgb}{1.0, 0.55, 0.0}
\definecolor{darkgreen}{rgb}{0.0, 0.5, 0.0}
\tikzset{
    path2Q/.style={draw=darkblue, fill=darkblue, pattern=horizontal lines, pattern color=darkblue, thick},
    pathDec/.style={draw=darkorange, fill=darkorange, pattern=crosshatch, pattern color=darkorange, thick},
    pathTran/.style={draw=darkgreen, fill=darkgreen, pattern=grid, pattern color=darkgreen,thick},
    totalerrorstyle/.style={semithick, black, mark=*, mark size=1, mark options={solid}},
    idealPlot/.style={color=idealcolor,thick,mark=none,},
    EnolaDyPlot/.style = {color=EnolaDycolor, mark=diamond*, only marks,},
    LNNPlot/.style = {color=LNNcolor, mark=*, only marks,},
    ZZPlot/.style = {color=ZZcolor, mark=o, mark options={solid},,only marks,},
    EnolaSPlot/.style = {color=EnolaScolor, mark=circle*, only marks,},
    DasAtomPlot/.style={draw=DasAtomcolor, thick, mark=square*, only marks, fill=DasAtomcolor, pattern=north west lines, pattern color=DasAtomcolor},
    AtomiquePlot/.style={draw=Atomiquecolor, thick, mark=triangle*, only marks, fill=Atomiquecolor, pattern=vertical lines, pattern color=Atomiquecolor},
}
\pgfplotsset{
    CompareAxis/.style={
        tick align=outside,
        tick pos=left,
        x grid style={white!50.1960784313725!black},
        xlabel={Number of Qubits},
        xmin=5, xmax=50,
        xtick style={black,line width=2pt},
        y dir=reverse,
        y grid style={color=gray!30},
        ylabel style={at={(axis description cs:-0.2,0.5)},anchor=south},
        ylabel={Fidelity, log scale},
        ymajorgrids,
        ymin=0, ymax=45,
        xlabel style={font=\huge},
        ylabel style={font=\huge},
        ytick={0,5,10,15,20,25,30,35,40,45}, 
        ytick style={black,line width=2pt},
        yticklabels={0,\(\displaystyle 10^{-5}\),\(\displaystyle 10^{-10}\),\(\displaystyle 10^{-15}\),\(\displaystyle 10^{-20}\),\(\displaystyle 10^{-25}\),\(\displaystyle 10^{-30}\),\(\displaystyle 10^{-35}\),\(\displaystyle 10^{-40}\),\(\displaystyle 10^{-45}\)},
    },
}
\tikzset{
    control atom/.style={circle, draw=blue, pattern=north east lines, pattern color=blue!50, thick, minimum size=0.03cm},
    target atom/.style={circle, draw=red, pattern=grid, pattern color=red!50, thick, minimum size=0.03cm},
    dashed node/.style={circle, draw, dashed, fill=gray!10, minimum size=0.03cm}
}
\theoremstyle{definition}
\newtheorem{example}{Example}
\theoremstyle{theorem}
\newtheorem{lemma}{Lemma}
\title{Optimal Compilation Strategies for QFT Circuits in Neutral-Atom Quantum Computing}
\author[1]{Dingchao Gao}
\author[2]{Yongming Li}
\author[1]{Shenggang Ying}
\author[3,*]{Sanjiang Li}
\affil[1]{Key Laboratory of System Software (Chinese Academy of Sciences) and State Key Laboratory of Computer Science, Institute of Software, Chinese Academy of Sciences}
\affil[2]{School of Mathematics and Statistics, Shaanxi Normal University, Xi'an, 710062, China}
\affil[3]{Centre for Quantum Software and Information (QSI), Faculty of Engineering and Information Technology, University of Technology Sydney, NSW 2007, Australia}
\affil[*]{sanjiang.li@uts.edu.au}
\begin{abstract}
Neutral-atom quantum computing (NAQC) offers distinct advantages such as dynamic qubit reconfigurability, long coherence times, and high gate fidelities, making it a promising platform for scalable quantum computing. Despite these strengths, efficiently implementing quantum circuits like the Quantum Fourier Transform (QFT) remains a significant challenge due to atom movement overheads and connectivity constraints. This paper introduces optimal compilation strategies tailored to QFT circuits and NAQC systems, addressing these challenges for both linear and grid-like architectures. By minimizing atom movements, the proposed methods achieve theoretical lower bounds in atom movements while preserving high circuit fidelity. Comparative evaluations against state-of-the-art compilers demonstrate the superior performance of the proposed methods. These methods could serve as benchmarks for evaluating the performance of NAQC compilers.
\end{abstract}
\begin{document}

\flushbottom
\maketitle
%
%
\thispagestyle{empty}

\section{Introduction}
Quantum computing has emerged as a transformative technology capable of solving complex problems in cryptography \cite{shor:focs94}, database search \cite{grover1996fast}, chemical simulations \cite{peruzzo2014variational}, and machine learning \cite{schuld2015introduction}. The rapid advancement of quantum computing research has led to the development of multiple hardware platforms, each with unique strengths and challenges. Among these, superconducting qubits have gained prominence due to their robust control schemes and compatibility with conventional microelectronics~\cite{arute2019quantum}. In the past several years, \emph{neutral-atom quantum computing} (NAQC) has attracted significant attention for its inherent advantages, including scalability, high qubit connectivity, long coherence times, and superior gate fidelity~\cite{henriet2020quantum, Bluvstein_2022, radnaev2024universal}.

In superconducting quantum devices, such as IBM's heavy‐hex architecture, qubits are sparsely arranged to minimize crosstalk. This sparse arrangement, however, requires additional routing steps (mainly through inserting SWAP gates) to facilitate operations between distant qubits. In contrast, NAQC devices, often arranged on a 2D grid, enable two-qubit gates between any two qubits by shuttling them closer. However, longer shuttling distances can increase noise levels. Thus, careful circuit compilation is critical  on both platforms to achieve high circuit fidelity.

The \emph{Quantum Fourier Transform} (QFT) is a foundational component in many quantum algorithms, such as Shor's factoring algorithm \cite{shor:focs94}, phase estimation \cite{kitaev1995phase_estimation}, quantum simulation~\cite{Lloyd_simulator_1996},  amplitude amplification \cite{Brassard_2002}, and the HHL algorithm for solving systems of linear equations \cite{Harrow_hhl_2009}. Despite its importance, implementing QFT efficiently on current quantum hardware represents a significant challenge. The QFT's all-to-all interaction pattern requires careful consideration of hardware-specific constraints, particularly in systems with limited qubit connectivity. Due to its core role in quantum computing, QFT optimization has been extensively studied in the literature, see \cite{maslov+:physreva07,takahashi2007quantum,nam2020approximate,zhang2021time,jin_quantum_2023,Park2023ReducingCC,gao_linear_2024,baumer2024quantum}.

In the realm of superconducting quantum devices, Maslov \cite{maslov+:physreva07} proposed a linear-depth transformation for QFT circuits on the linear nearest-neighbor (LNN) architecture, where qubits lie along a single path. Modern superconducting devices often feature more complex connectivity, making it challenging to identify a single Hamiltonian path that visits all qubits. Building on this foundational work, Jin \emph{et al.}~\cite{jin_quantum_2023} introduced efficient mapping techniques tailored to general 2D grid and IBM heavy-hex architectures. Focused on IBM's heavy-hex architecture, Gao \emph{et al.}~\cite{gao_linear_2024}  proposed specifically optimized linear-depth transformations. 

For NAQC devices, costly SWAP gates can be entirely replaced with atom movement (cf.~\cite{Schmid24qst}). While several compilation algorithms have been proposed \cite{tan2024compilation,wang2024atomique,Wang23Q-Pilot,huang2020predicting}, they fail to produce optimal transformations. Furthermore, there is a lack of benchmarking tools to evaluate how close these generated transformations are to the optimal solutions. 

In this work, we address these challenges by introducing optimal compilation strategies tailored to QFT circuits and NAQC platforms. Our contributions are threefold:
\begin{itemize}
    \item We propose a Linear Path strategy that achieves theoretical lower bounds in movement counts for linear architectures.
    \item We extend this approach to 2D grid-like architectures via a Zigzag Path strategy, which preserves gate parallelism while minimizing movement overhead.
    \item We demonstrate that our methods outperform state-of-the-art compilation techniques, such as Enola~\cite{tan2024compilation} and Atomique~\cite{wang2024atomique}, exponentially in qubit count and in terms of movement efficiency and overall fidelity.
\end{itemize}

Our strategies leverage the unique capabilities of NAQC, such as dynamic qubit reconfigurability and global Rydberg illumination, to optimize circuit execution while minimizing atom movements. Furthermore, we extend our methods to MaxCut QAOA circuits \cite{farhi2014quantum} with reduced connectivity, demonstrating their versatility for a wide range of quantum algorithms beyond fully connected QFT circuits. By addressing the inefficiencies of atom movement and connectivity constraints, this work advances the implementation of practical quantum algorithms on NAQC platforms. Our methods could serve as benchmarks for evaluating quantum circuit compilation algorithms and inspire the development of new, more efficient approaches.

The remainder of this paper is organized as follows. Section~\ref{sec:background} reviews NAQC and discusses the challenges of compilation in NAQC. Section~\ref{sec:qft} revisits QFT circuits and their optimal transformation in linear superconducting architectures. Section~\ref{sec:opt} details our proposed method for NAQC platforms, including strategies for both linear and grid-like architectures. Section~\ref{sec:eva} presents experimental evaluations and comparisons with state-of-the-art algorithms. Finally, Section~\ref{sec:con} concludes with a summary of our findings and their implications for scalable quantum computing.



\begin{figure}[htbp]
\centering


    \resizebox{\columnwidth}{!}{

\begin{tikzpicture}
    \tikzstyle{unoccupied}=[fill=gray!20, circle, minimum size=4pt, inner sep=0pt]
    \tikzstyle{occupied}=[fill=black, circle, minimum size=4pt, inner sep=0pt]
    \tikzstyle{aod_circle}=[fill={rgb,255:red,200;green,80;blue,80}, circle, minimum size=4pt, inner sep=0pt]
    \tikzstyle{aod_line}=[color={rgb,255:red,200;green,80;blue,80},thick, dotted]
    \tikzstyle{rb_circle}=[fill={rgb,255:red,160;green,210;blue,200}, opacity=0.6]
    
    \begin{scope}
        \draw[->, thick] (0,0) -- (4, 0) node[right] {$x$};
        \draw[->, thick] (0,0) -- (0, 4) node[above] {$y$};

        \node[below] at (1,0) {$d$};
        \node[left] at (0,1) {$d$};
        \foreach \x in {2,3} \node[below] at (\x,0) {$\x d$};
        \foreach \y in {2,3} \node[left] at (0,\y) {$\y d$};
        \foreach \x in {1,2,3} {
            \draw (\x,0) -- (\x,-0.1);
        }
        
        \foreach \y in {1,2,3} {
            \draw (0,\y) -- (-0.1,\y);
        }
        
        \draw[aod_line] (0.2, 1) -- (3.4, 1) node[right] {$x_0$};
        \draw[aod_line] (0.2, 3) -- (3.4, 3) node[right] {$x_1$};
        \draw[aod_line] (1, 0.2) -- (1, 3.4) node[above] {$y_0$};
        \draw[aod_line] (2, 0.2) -- (2, 3.4) node[above] {$y_1$};
        \draw[aod_line,->,thick,solid] (1,3.3) -- (1.8,3.3);
        \draw[aod_line,->,thick,solid] (2,3.3) -- (2.8,3.3);
        \draw[aod_line,->,thick,solid] (3.3,1) -- (3.3,1.8);
    
        \foreach \x in {1,2,3} {
            \foreach \y in {1,2,3} {
                \node[unoccupied] at (\x, \y) {};
            }
        }
        
        \foreach [count=\i from 0] \x/\y in {1/3, 2/1, 1/1,3/2,2/3} {
            \node[occupied] at (\x, \y) {};
        }
        
        \foreach \x/\y in {1/3, 2/1} {
            \node[aod_circle] at (\x, \y) {};
        }

        \node[below] at ($(2,-0.6)$) {\textbf{(a)}};
    \end{scope}

    \begin{scope}[shift={(5,0)}]

        
        
    
        \foreach \x in {1,2,3} {
            \foreach \y in {1,2,3} {
                \node[unoccupied] at (\x, \y) {};
            }
        }
        
        \foreach [count=\i from 0] \x/\y in {1/1,3/2,2/3} {
            \node[occupied] at (\x, \y) {};
        }
        
        \foreach \x/\y in {1.8/3, 2.8/2} {
            \node[aod_circle] at (\x, \y) {};
        }
        
        \draw[aod_line,->] (1.1,3) -- (1.7,3);
        \draw[aod_line,->] (2.1,1.1) -- (2.7,1.9);
        \node[below] at ($(2,-0.6)$) {\textbf{(b)}};
    \end{scope}
    \begin{scope}[shift={(10,0)}]

        
        
    
        \foreach \x in {1,2,3} {
            \foreach \y in {1,2,3} {
                \node[unoccupied] at (\x, \y) {};
            }
        }
        
        \foreach [count=\i from 0] \x/\y in {1/1,3/2,2/3,1.8/3, 2.8/2} {
            \fill[rb_circle] (\x, \y) circle (0.4);
        }
        \foreach [count=\i from 0] \x/\y in {1/1,3/2,2/3,1.8/3, 2.8/2} {
            \node[occupied] at (\x, \y) {};
        }
        \node[below] at ($(2,-0.6)$) {\textbf{(c)}};
    \end{scope}
    \begin{scope}[shift={(15,0)}]

        
        
    
        \foreach \x in {1,2,3} {
            \foreach \y in {1,2,3} {
                \node[unoccupied] at (\x, \y) {};
            }
        }
        
        \foreach [count=\i from 0] \x/\y in {1/1,3/2,2/3,} {
            \node[occupied] at (\x, \y) {};
        }
        \draw[aod_line] (1.8, 0.2) -- (1.8, 3.4) node[above] {};
        \draw[aod_line] (2.8, 0.2) -- (2.8, 3.4) node[above] {};
        
        \draw[aod_line] (0.2, 3) -- (3.4, 3) node[above] {};
        \draw[aod_line] (0.2, 2) -- (3.4, 2) node[above] {};
        \draw[aod_line,->,thick,solid] (1.8,3.3) -- (1,3.3);
        \draw[aod_line,->,thick,solid] (2.8,2.3) -- (2,2.3);
        \foreach \x/\y in {1.8/3, 2.8/2} {
            \node[aod_circle] at (\x, \y) {};
        }
        \node[below] at ($(2,-0.6)$) {\textbf{(d)}};
    \end{scope}
    \node[anchor=south] at ($(current bounding box.north) + (0, 0.5em)$) {
        \begin{tikzpicture} 
            \node[unoccupied, label=right:{Optical trap}] at (0, 0) {};
            \node[occupied,   label=right:{SLM atom}]     at (3, 0) {};
            \node[aod_circle, label=right:{AOD atom}]     at (6, 0) {};
            \fill[rb_circle] (9, 0.05) circle (0.2);
            \node at (9, -0.05) [label=right:{Rydberg laser}] {};
        \end{tikzpicture}
    };
\end{tikzpicture}


\caption{
Illustration of atom movement and Rydberg illumination in a $3\times 3$ grid DPQA architecture: (a) Atoms are initially positioned at separate grid points (i.e., sites);  (b) During the ``meet'' step, atoms moves so that each interacting pair is co-located at the same grid point;  (b) A global Rydberg laser applies \(CZ\) gates simultaneously to all co-located atom pairs; (d) In the  ``separate'' step, atom pairs moves apart, ensuring they occupy different grid points, though not necessarily their initial positions.}
\label{fig:fpqa}
\end{figure}
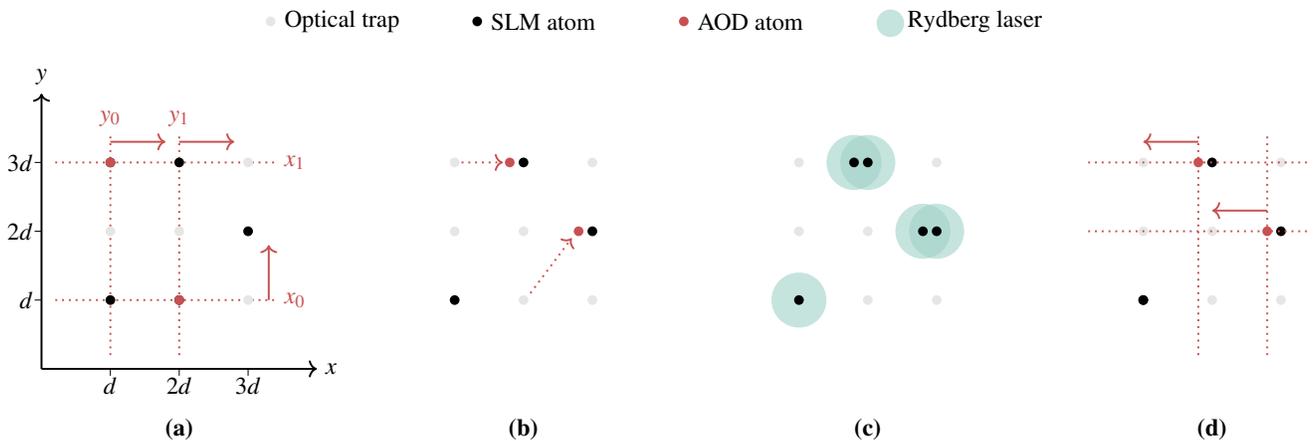

\section{Background}\label{sec:background}
In this section, we provide background on quantum computing and NAQC technology and examine the compilation challenges specific to NAQC.

\subsection{Quantum Computing Basics}
Quantum computing leverages quantum mechanical phenomena such as superposition and entanglement to solve problems that are intractable for classical computers. A quantum circuit, the computational model used in most quantum algorithms, consists of qubits manipulated by single- and multi-qubit gates. These gates form a sequence of unitary operations that evolve the quantum state towards the desired solution. However, the efficiency of quantum circuit execution is heavily influenced by the architecture and physical implementation of qubits. Hardware constraints such as limited qubit connectivity, short coherence times, and low gate fidelity necessitate optimizations tailored to specific platforms.

\subsection{Neutral-Atom Quantum Computing (NAQC)}
NAQC has emerged as a promising platform for scalable quantum computation. This technology uses neutral atoms, typically alkali atoms like rubidium or cesium, as qubits. The atoms are trapped and manipulated using lasers and magnetic fields, enabling precise control over quantum states. Single-qubit operations are performed using Raman laser pulses~\cite{beugnon2007two, graham_multi-qubit_2022, Levine_2022}, while multi-qubit gates are implemented through the Rydberg blockade mechanism~\cite{evered2023highfidelity,prl19-levine}. In this process, two atoms within the Rydberg interaction radius (\(r_{\mathrm{int}}\)) accumulate a controlled phase upon simultaneous excitation to a Rydberg state. This interaction facilitates high-fidelity two-qubit gates, such as the controlled-Z (\(CZ\)) gate, which is a common building block in quantum circuits.

A distinguishing feature of NAQC is its dynamic qubit reconfigurability, enabled by the physically mobile nature of neutral atoms. Using acousto-optic deflectors (AODs) and spatial light modulators (SLMs), atoms can be repositioned dynamically within a two-dimensional grid. This flexibility allows for arbitrary qubit interactions without requiring SWAP gates, setting NAQC apart from more rigid architectures like superconducting qubits. However, the need for atom movement introduces a unique optimization challenge: minimizing the time and distance of qubit transport to maintain coherence and fidelity.

The Dynamically Field-Programmable Qubit Array (DPQA) architecture exemplifies the potential of NAQC systems~\cite{Bluvstein_2022,Bluvstein_2023,reichardt2024}. On DPQA devices, qubits are held in two types of traps:
\begin{itemize}
    \item \textbf{Static traps}: Generated by SLMs, these traps provide stable positions for qubits, spaced (e.g., \(2.5 \times r_{\mathrm{int}}\)) to minimize crosstalk.
    \item \textbf{Mobile traps}: The AOD forms traps at the intersections of a grid of rows and columns, where each row or column coordinate can be independently activated, moved, and deactivated, enabling arbitrary rearrangements of atoms.
\end{itemize}

During quantum operations, mobile traps transport atoms to align with the requirements of the circuit, such as executing a \(CZ\) gate. See Fig.~\ref{fig:fpqa} for an illustration. To achieve high performance, these movements must be carefully coordinated to minimize total distance and avoid collisions.

\subsection{Compilation Challenges in NAQC} \label{sec:naqct}
Quantum circuit compilation for NAQC involves converting high-level quantum algorithms into hardware-executable instructions while addressing the platform's unique characteristics. In superconducting architectures, qubits are fixed, and connectivity limitations are managed through SWAP gates, which significantly increase circuit depth and reduce fidelity. NAQC eliminates the need for SWAP gates but shifts the focus to optimising atom movement. Key challenges include:
\begin{itemize}
    \item \textbf{Minimising movement time and distance}: Excessive or poorly optimized movements can lead to decoherence and gate errors.
    \item \textbf{Maintaining gate parallelism}: Leveraging the ability to execute gates simultaneously during global Rydberg illumination requires precise scheduling of movements.
\end{itemize}
Addressing these challenges is essential for realising the full potential of NAQC. By optimizing qubit placement and routing strategies, it is possible to achieve high-fidelity circuit execution while minimizing resource overheads.

\subsection{Compilation Procedures in NAQC}\label{sec:meet-interact-separate}
Consider a quantum circuit \(C\) consisting of single- and two-qubit gates that are native to a DPQA device $M$. In NAQC, single-qubit gates are implemented independently using Raman laser pulses and are therefore typically excluded from \(C\) when assessing the cost of atom movement. This simplification allows the problem to focus on the native two-qubit gates, which are typically \(CZ\) gates. These \(CZ\) gates are organized into layers, each of which can, in principle, be executed in parallel through global Rydberg illumination.

Initially, each logical qubit in \(C\) is mapped to a physical atom in the DPQA device $M$. For each layer of parallel \(CZ\) gates, any pair of interacting qubits must be co-located at the same  site (i.e., grid point) during the gate operation and separated afterward. As illustrated in Fig.~\ref{fig:atom-move}, the execution of any \(CZ\) gate (or several \(CZ\) gates in parallel)  involves the following steps:
\begin{enumerate}
    \item \textbf{Meet}: Move one or both atoms to the same site from their previous positions.
    \item \textbf{Interact}: Apply a global Rydberg laser to interact the qubits at the same position, executing the corresponding \(CZ\) gates in parallel.
    \item \textbf{Separate}: After the interaction, move the atoms apart so that they occupy different sites. 
\end{enumerate}
The execution of each \(CZ\) involves at least \emph{two} atom moves: a `meet' step and a ``separate'' step. These processes can often be parallelized across several or all gates within the same layer. The maximum distance moved during these steps determines the optimal distance required. In the DPQA device $M$, this maximal distance is at least $d$ in each step, where $d$ is the unit distance of the device.  Consequently, a circuit with $S$ layers of $CZ$ gates requires at least $2S$ atom moves, covering a total distance of at least $2S d$.

\begin{figure}[htbp]
    \centering
    \renewcommand{\arraystretch}{2} 

    \begin{tabular}{l|>{\raggedright\arraybackslash}m{0.3\columnwidth}|>{\raggedright\arraybackslash}m{0.3\columnwidth}}
        & Restore mode & Swap mode \\
        \hline
        1. Initial & 
        \parbox[c]{\linewidth}{
            \begin{tikzpicture}
                \node[control atom] (initial_control) at (0,0) {};
                \node[target atom, right=2 of initial_control] (initial_target) {};
            \end{tikzpicture}
        } &
        \parbox[c]{\linewidth}{ 
            \begin{tikzpicture}
                \node[control atom] (initial_control) at (0,0) {};
                \node[target atom, right= 2 of initial_control] (initial_target) {};
            \end{tikzpicture}
        } \\
        \hline
        2. Meet & 
        \parbox[c]{\linewidth}{ 
            \begin{tikzpicture}
                \node[control atom] (meet_control) at (0,0) {};
                \node[target atom, right=1 of meet_control] (meet_target) {};
                \node[dashed node, right=2 of meet_control] {};
                \draw[->, dashed] (meet_target) -- ++(-1,0);
            \end{tikzpicture}
        } &
        \parbox[c]{\linewidth}{ 
            \begin{tikzpicture}
                \node[control atom] (meet_control) at (0,0) {};
                \node[target atom, right=1 of meet_control] (meet_target) {};
                \node[dashed node, right=2 of meet_control] {};
                \draw[->, dashed] (meet_target) -- ++(-1,0);
            \end{tikzpicture}
        } \\
        \hline
        3. Interact & 
        \parbox[c]{\linewidth}{ 
            \begin{tikzpicture}
                \node[control atom] (interact_control) at (0,0) {};
                \node[target atom, right=0.1 of interact_control] (interact_target) {};
            \end{tikzpicture}
        } &
        \parbox[c]{\linewidth}{ 
            \begin{tikzpicture}
                \node[control atom] (interact_control) at (0,0) {};
                \node[target atom, right=0.1 of interact_control] (interact_target) {};
            \end{tikzpicture}
        } \\
        \hline
        4. Separate & 
        \parbox[c]{\linewidth}{ 
            \begin{tikzpicture}
                \node[control atom] (separate_control) at (0,0) {};
                \node[dashed node, right=0.1 of separate_control] {};
                \node[target atom, right=1 of separate_control] (separate_target) {};
                \draw[->, dashed] (separate_target) -- ++(0.5,0);
            \end{tikzpicture}
        } &
        \parbox[c]{\linewidth}{ 
            \begin{tikzpicture}
                \node[dashed node] (origin_control) at (0,0) {};
                \node[target atom, right=0.1 of origin_control] (separate_target) {};
                \node[control atom, right=1 of origin_control] (separate_control) {};
                \draw[->, dashed] (separate_control) -- ++(0.5,0);
            \end{tikzpicture}
        } \\
        \hline
        5. Final & 
        \parbox[c]{\linewidth}{ 
            \begin{tikzpicture}
                \node[control atom] (final_control) at (0,0) {};
                \node[target atom, right=2 of final_control] (final_target) {};
            \end{tikzpicture}
        } &
        \parbox[c]{\linewidth}{ 
            \begin{tikzpicture}
                \node[target atom] (final_target) at (0,0) {};
                \node[control atom, right=2 of final_target] (final_control) {};
            \end{tikzpicture}
        } \\
    \end{tabular}

    \caption{The meet-interact-separate steps for executing a $CZ$ gate in a DPQA device. In \textit{restore mode}, atoms return to their original positions after interaction, while in \textit{swap mode}, their positions are exchanged. Offset movements may be required to prevent collisions when two atoms occupy the same site.
    }
    \label{fig:atom-move}
\end{figure}

We note that the ``separate'' step is highly flexible, a flexibility that plays a crucial role in this paper. Specifically, we distinguish between two modes of this step: the ``swap'' mode exchanges the positions of the two interacting atoms, and the ``restore'' mode, which returns the two interacting atoms to their original positions. See Figure~\ref{fig:atom-move} for illustrations of these modes. The moves in the meet and separate steps are called \emph{big moves} in this paper, which involve significant atom movement between different grid point. We also note, for interaction purpose, two atoms may locate at each grid point in the same time. To avoid atom collision, we need move one atom a little bit away from the grid point. This kind of offset movements are also required when we, for example, swap the positions of two atoms (see the right of Figure~\ref{fig:atom-move}).

\section{Overview of Quantum Fourier Transform}
\label{sec:qft}

QFT is a fundamental component of many important quantum algorithms. Efficiently implementing QFT on quantum hardware is crucial for realizing the potential of these algorithms. However, the all-to-all interaction pattern of QFT poses significant challenges, especially in architectures with limited qubit connectivity. In this section, we review the structure of QFT circuits and discuss the optimal transformation strategies for linear superconducting architectures, which will serve as a foundation for our proposed methods in NAQC.

\subsection{Structure of the Quantum Fourier Transform}
For an $n$-qubit system, the QFT circuit consists of $n(n-1)/2$ controlled-phase gates, interleaved with single-qubit Hadamard operations. These gates are arranged in a specific pattern that transforms the input state into its transformed state. Fig.~\ref{fig:qft-5} illustrates a standard QFT circuit with $n = 5$ qubits.

\begin{figure}[hbtp]
    \centering
    \begin{minipage}[b]{0.65\columnwidth}
        \centering
        \resizebox{!}{20ex}{
        \begin{tikzpicture}
            \begin{yquant}
                qubit {$q_{\idx}$} q[5];
                h q[0];
                box {$P$} q[1] | q[0];
                box {$P$} q[2] | q[0];
                box {$P$} q[3] | q[0];
                box {$P$} q[4] | q[0];
                h q[1];
                box {$P$} q[2] | q[1];
                box {$P$} q[3] | q[1];
                box {$P$} q[4] | q[1];
                h q[2];
                box {$P$} q[3] | q[2];
                box {$P$} q[4] | q[2];
                h q[3];
                box {$P$} q[4] | q[3];
                h q[4];
            \end{yquant}
        \end{tikzpicture}
        }
        \subcaption{}
        \label{fig:qft-5}
    \end{minipage}
    
    \begin{minipage}[b]{0.8\columnwidth}
        \centering
        \resizebox{!}{20ex}{
        \begin{tikzpicture}
            \begin{yquant}
                qubit {$q_{\idx}$} q[5];
                box {$P$} q[1] | q[0];
                [thick] barrier (q);
                box {$P$} q[2] | q[0];
                [thick] barrier (q);
                box {$P$} q[2] | q[1];
                box {$P$} q[3] | q[0];
                [thick] barrier (q);
                box {$P$} q[3] | q[1];
                box {$P$} q[4] | q[0];
                [thick] barrier (q);
                box {$P$} q[4] | q[1];
                box {$P$} q[3] | q[2];
                [thick] barrier (q);
                box {$P$} q[4] | q[2];
                [thick] barrier (q);
                box {$P$} q[4] | q[3];
                [thick] barrier (q);
            \end{yquant}
        \end{tikzpicture}}
        \subcaption{}
        \label{fig:qft-5-par}
    \end{minipage}
    
    \begin{minipage}[b]{\columnwidth}
        \centering
        \resizebox{!}{20ex}{\begin{tikzpicture}
\tikzset{
    Diam/.style={
        draw,
        diamond,
        inner sep=2pt,
        fill=black,
    },
    upper/.style={
        fill=black,
        inner sep=1.7pt,
        shape=dart,
        shape border rotate=90,
    },
    lower/.style={
        fill=black,
        inner sep=1.7pt,
        shape=dart,
        shape border rotate=270,
    }
}
\begin{yquant*}[operator/separation=0cm]
qubit {$(Q_{\idx})$} q[5];
hspace {0.1cm} q;
inspect {$q_{0}$} q[0];
inspect {$q_{1}$} q[1];
inspect {$q_{2}$} q[2];
inspect {$q_{3}$} q[3];
inspect {$q_{4}$} q[4];
box {$P$} q[1] | q[0];
swap (q[0],q[1]);
inspect {$q_{0}$} q[1];
inspect {$q_{1}$} q[0];
[thick] barrier (q);
box {$P$} q[2] | q[1];
swap (q[1],q[2]);
inspect {$q_{0}$} q[2];
inspect {$q_{2}$} q[1];
[thick] barrier (q);
box {$P$} q[1] | q[0];
box {$P$} q[3] | q[2];
swap (q[0],q[1]);
swap (q[2],q[3]);
inspect {$q_{0}$} q[3];
inspect {$q_{3}$} q[2];
inspect {$q_{1}$} q[1];
inspect {$q_{2}$} q[0];
[thick] barrier (q);
box {$P$} q[2] | q[1];
box {$P$} q[4] | q[3];
swap (q[1],q[2]);
swap (q[3],q[4]);
inspect {$q_{0}$} q[4];
inspect {$q_{4}$} q[3];
inspect {$q_{1}$} q[2];
inspect {$q_{3}$} q[1];
[thick] barrier (q);
box {$P$} q[1] | q[0];
box {$P$} q[3] | q[2];
swap (q[0],q[1]);
swap (q[2],q[3]);
inspect {$q_{4}$} q[2];
inspect {$q_{1}$} q[3];
inspect {$q_{3}$} q[0];
inspect {$q_{2}$} q[1];
[thick] barrier (q);
box {$P$} q[2] | q[1];
swap (q[1],q[2]);
inspect {$q_{4}$} q[1];
inspect {$q_{2}$} q[2];
[thick] barrier (q);
box {$P$} q[1] | q[0];
swap (q[0],q[1]);

null q[4];
null q[4];
inspect {$q_{0}$} q[4];
null q[3];
null q[3];
inspect {$q_{1}$} q[3];
null q[2];
null q[2];
inspect {$q_{2}$} q[2];
inspect {$q_{3}$} q[1];
inspect {$q_{4}$} q[0];
\end{yquant*}

    
    
    
    
    
\end{tikzpicture}}
        \subcaption{}
        \label{fig:maslov}
    \end{minipage}
    
    \begin{minipage}[b]{.6\columnwidth}
    \centering
    \resizebox{\columnwidth}{!}{
         \begin{yquantgroup}
             \registers{
             qubit {} q[2];
             }
             \circuit{
             box {$P(\lambda)$} q[1] | q[0];
             }
             \equals
             \circuit{
             box {$Rz(\lambda/2)H$} q[1];
             zz (q[0], q[1]);
             box {$Rx(-\lambda/2)$} q[1];
             zz (q[0], q[1]);
             H q[1];
             box {$P(\lambda/2)$} q[0];
             }
        \end{yquantgroup}
    }
    \subcaption{}
    \label{fig:decom-cp}
    \end{minipage}
    \caption{(\subref{fig:qft-5}) Standard QFT-5 circuit, with each \(P\) denoting a controlled-phase gate. 
    (\subref{fig:qft-5-par}) Layered partition of QFT-5 (with single-qubit gates removed), showing 7 sequential controlled-phase layers.
    (\subref{fig:maslov}) The optimal transformation of QFT-5 on a linear superconducting architecture, with each mapping annotated to reflect the updated qubit arrangement after applying the necessary SWAP gates.
    (\subref{fig:decom-cp}) Decomposition of a controlled-phase gate into single-qubit gates and \(CZ\) gates.}
\end{figure}
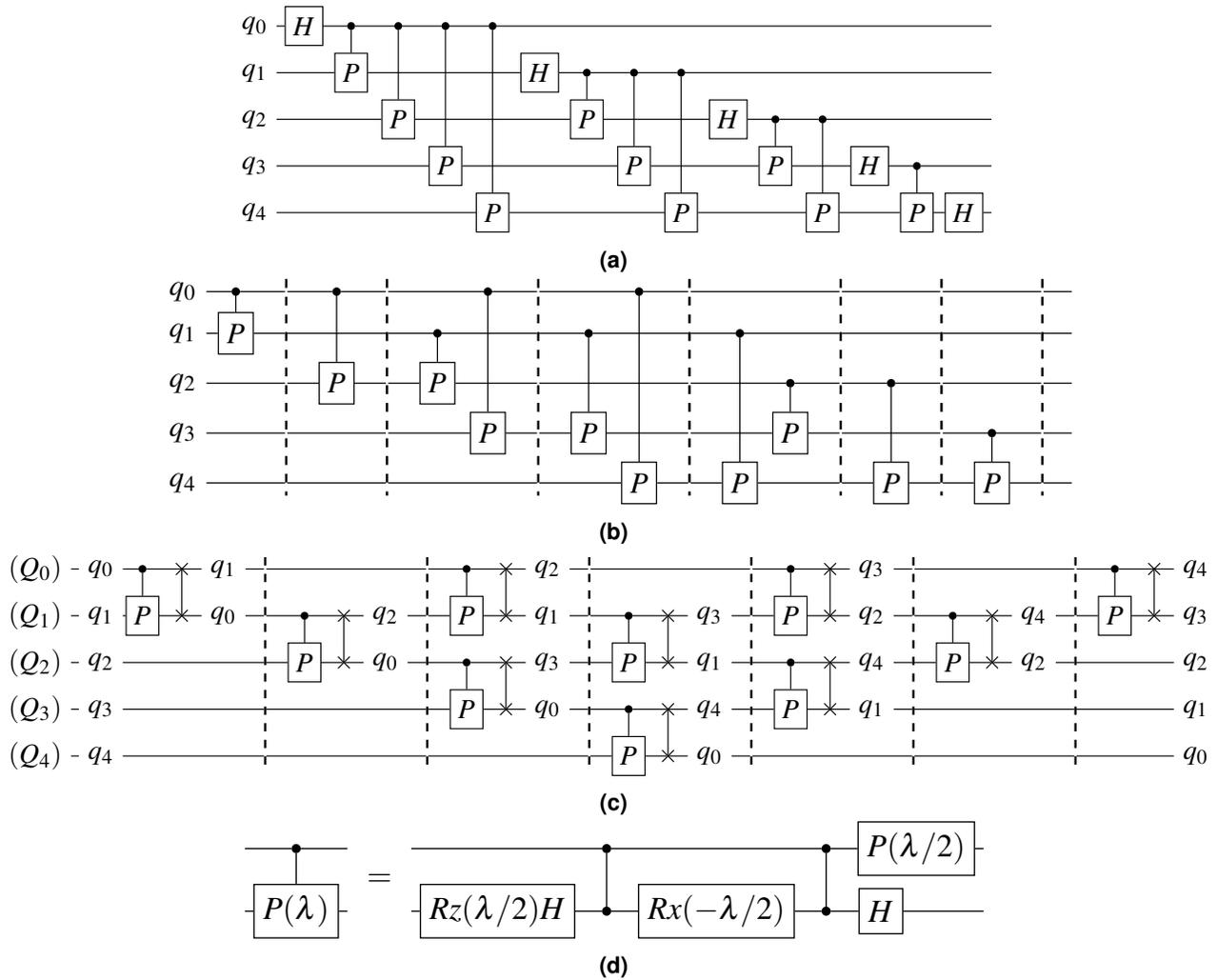

For circuit transformation purposes, we focus only on two-qubit gates in this paper. By pushing these gates as far forward as possible, the QFT circuit consists of $2n-3$ layers of two-qubit gates. Fig.~\ref{fig:qft-5-par} illustrates that QFT-5 has $2n-3=7$ two-qubit gate layers.

\subsection{Maslov's Optimal Transformation for Linear Superconducting Architectures}

The QFT-$n$ circuit, having a complete interaction graph, cannot be directly implemented on linear architectures or practical superconducting devices. 
In superconducting devices, this necessitates the use of SWAP gates to iteratively remap qubits until all two-qubit gates act on adjacent physical qubits.

For linear superconducting architectures, Maslov \cite{maslov+:physreva07} discovered an optimal transformation for the QFT-$n$ circuit (see Fig.~\ref{fig:maslov}). This transformation requires $2n-3$ layers of SWAP gates, resulting in a transformed circuit with linear depth in $n$. The transformation warrants close examination.

In Maslov's approach, we consider each layer of two-qubit gates, along with its accompanying SWAP gates, as a single \textbf{mapping stage} (m-stage for short). Each m-stage is associated with a mapping. Let $\tau_k$ denote the mapping at the $k$-th m-stage. Initially, $\tau_0$ maps logical qubit $q_i$ to physical qubit $Q_i$. We next provide a characterization for those SWAPs used in each m-stage.

For m-stages $0\leq k\leq n-2$, define
\begin{equation}\label{eq:Nk}
    N_k \equiv \set{j\in \{k, k-2, \ldots, k\!\! \pmod{2}}.
\end{equation} 
For m-stages $n-1\leq k \leq 2n-4$, define $N_k=N_{2n-4-k}$.

The following lemma characterizes the inserted SWAP gates at each m-stage.

\begin{lemma}
    At m-stage $k$, the two-qubit gates to be executed are $CP(r_j,s_j)$ for $j\in N_k$, where $r_j$ and $s_j$ are the logical qubits in the QFT-$n$ circuit that are mapped to physical qubits $Q_j$ and $Q_{j+1}$, respectively, under the mapping $\tau_{k}$. Thus, under the mapping $\tau_{k}$, these two-qubit gates act on physical qubits $Q_j$ and $Q_{j+1}$ for $j\in N_k$. Moreover, SWAP gates are inserted to exchange the states of physical qubits $Q_j$ and $Q_{j+1}$ for $j\in N_k$. 
\end{lemma}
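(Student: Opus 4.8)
\emph{Proof strategy.} The plan is to prove the lemma by induction on the m-stage index $k$, carrying along an explicit closed form for $\tau_k$ from which all three assertions follow simultaneously. The starting point is the scheduling fact, read off from the forward-pushed layering in Fig.~\ref{fig:qft-5-par}: the two-qubit gates assigned to m-stage $k$ are exactly the controlled-phase gates $CP(q_a,q_b)$ with $a+b=k+1$ and $0\le a<b\le n-1$, and the number of such pairs equals $|N_k|$. Because every controlled-phase gate is diagonal, all of them commute, so correctness of the transformed circuit is independent of the order of execution; the entire content of the lemma is therefore the geometric claim that at m-stage $k$ each of these pairs occupies a pair of adjacent sites $(Q_j,Q_{j+1})$ with $j\in N_k$, and that swapping exactly these adjacencies reproduces the same picture one stage later.

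The crux is to guess and then verify the invariant
\begin{equation*}
  \tau_k:\quad q_a\mapsto Q_{k-2a},\qquad q_{k+1-a}\mapsto Q_{k-2a+1}\qquad\bigl(\max(0,\,k+2-n)\le a\le\lfloor k/2\rfloor\bigr),
\end{equation*}
with the remaining, non-interacting logical qubits filling the boundary sites in monotone order. Under this form the set of left endpoints $\{k-2a\}$ is precisely $N_k$, so for each $j\in N_k$ the sites $Q_j$ and $Q_{j+1}$ carry a pair $\{q_a,q_{k+1-a}\}$ summing to $k+1$; letting $a$ range recovers all of the layer-$k$ gates, each placed on adjacent qubits, and the prescribed SWAPs act on exactly these adjacencies. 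This yields the ``gates executed'', the ``adjacency'', and the SWAP parts of the statement at once, and the base case $k=0$ ($\tau_0=\mathrm{id}$, $N_0=\{0\}$, gate $CP(q_0,q_1)$ on $(Q_0,Q_1)$) is immediate.

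For the inductive step I would apply the layer-$k$ SWAPs and check that the resulting permutation is again of the displayed form with parameter $k+1$. The mechanism is that inside each gated pair the smaller-indexed qubit advances one site to the right and the larger-indexed one retreats one site to the left: a short index computation shows $q_a$ lands at $Q_{(k+1)-2a}$ and, now pairing with $q_{k+2-a}$ at the next stage, still satisfies the invariant, while the former partner, now pairing with $q_{a+1}$, occupies the matching site. I would then record the global bookkeeping: since the sum $a+b$ is a complete invariant of an unordered pair, distinct layers execute disjoint gate sets, $\sum_k|N_k|=n(n-1)/2$ accounts for every QFT gate exactly once, and $\tau_{2n-3}$ is the bit-reversal $q_i\mapsto Q_{n-1-i}$ expected at the QFT output.

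The main obstacle is the boundary bookkeeping rather than the interior recursion. Three situations need separate care: (i) the innermost pair, where the two interacting qubits may merge with the ``middle'' singleton and the parity of the active prefix flips between $k$ even and $k$ odd; (ii) a qubit reaching an end site $Q_0$ or $Q_{n-1}$, where it stops interacting and must be shown to sit as a stationary singleton of the correct index; and (iii) the grow-to-shrink transition at $k=n-2\to n-1$, where the active region stops expanding and the symmetric rule $N_k=N_{2n-4-k}$ takes over as the qubits $q_0,q_1,\dots$ successively ``finish'' and accumulate at the right end in reverse order. The delicate part is verifying that the non-interacting qubits are always placed so that the next layer's pairs still align exactly with the $N_{k+1}$ adjacencies; the interior index shift is routine once this is in hand.
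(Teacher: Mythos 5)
The paper states this lemma without proof, treating it as a reading of Maslov's construction (Fig.~\ref{fig:maslov}), so there is no in-paper argument to compare against; your proposal in effect supplies the missing proof. Your strategy is sound and your key invariant is correct: I checked that $\tau_k\colon q_a\mapsto Q_{k-2a}$, $q_{k+1-a}\mapsto Q_{k-2a+1}$ for $\max(0,k+2-n)\le a\le\lfloor k/2\rfloor$ reproduces every stage of the $n=5$ example, that the set of left endpoints $\{k-2a\}$ equals $N_k$ in both the growing regime $k\le n-2$ and the shrinking regime where $N_k=N_{2n-4-k}$, and that the interior inductive step closes (after the stage-$k$ SWAP, $q_a$ sits at $Q_{(k+1)-2a}$ and its new partner $q_{k+2-a}$, formerly the larger member of pair $a-1$, lands at $Q_{(k+1)-2a+1}$, while the old partner $q_{k+1-a}$ becomes the larger member of the new pair indexed by $a+1$). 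Your identification of layer $k$ with the pairs $a+b=k+1$ also matches the forward-pushed layering, and the gate count $\lfloor k/2\rfloor+1=|N_k|$ agrees. What remains is exactly what you flag: the placement of the inactive qubits (the odd-$k$ singleton $q_{(k+1)/2}$ at $Q_0$, the untouched $q_{k+2},\dots,q_{n-1}$ at their home sites for $k\le n-2$, and the finished qubits $q_0,q_1,\dots$ stacking in reverse order at the right end for $k>n-2$) must be added to the invariant so that the induction is over the full permutation rather than only the active pairs; as stated, ``filling the boundary sites in monotone order'' is too loose to carry the inductive step at the $k=n-2\to n-1$ transition. This is routine bookkeeping rather than a gap in the idea, and once written out it yields the lemma together with the closing observations you note ($\sum_k|N_k|=n(n-1)/2$ and $\tau_{2n-3}\colon q_i\mapsto Q_{n-1-i}$).
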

For convenience, for each $0\leq k\leq 2n-4$, we write 
\begin{equation}\label{eq:Ek}
    E_k \equiv \set{(j,j+1) \mid j\in N_k}
\end{equation} 
and call edges in $E_k$ the \emph{relevant edges} in the $k$-th m-stage.

Through the SWAP operations corresponding to those relevant edges, the initial trivial mapping $\tau_0$ is transformed into the final mapping $\tau_{2n-3}$, which maps $q_0, \ldots,q_{n-1}$ to $Q_{n-1},\ldots,Q_0$. An illustration is given in Fig.~\ref{fig:maslov}. 

The following lemma demonstrates that the SWAP gates inserted in consecutive m-stages operate on alternating edges in the linear architecture.
\begin{lemma}\label{lem:alternating}
    For any odd $k$ and any even $k'$, $E_k\cap E_{k'}=\varnothing$. In particular, $E_k \cap E_{k+1}=\varnothing$ for any $0\leq k\leq 2n-5$. 
\end{lemma}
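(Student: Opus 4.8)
The plan is to reduce this claim about edge sets to a claim about the index sets $N_k$, and then settle it with a single parity argument. First I would observe that each relevant edge $(j,j+1)\in E_k$ is uniquely determined by its left endpoint $j$; hence two edges coincide exactly when their left endpoints coincide, and by \eqref{eq:Ek} we have $E_k\cap E_{k'}=\varnothing$ if and only if $N_k\cap N_{k'}=\varnothing$. It therefore suffices to prove that $N_k$ and $N_{k'}$ are disjoint whenever $k$ is odd and $k'$ is even.

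The key structural fact I would establish is a parity invariant: every element of $N_k$ is congruent to $k$ modulo $2$. For $0\le k\le n-2$ this is immediate from \eqref{eq:Nk}, since $N_k=\{k,k-2,\ldots\}$ consists precisely of the integers in $[0,k]$ having the same parity as $k$. For the folded regime $n-1\le k\le 2n-4$, where $N_k=N_{2n-4-k}$, I would set $m=2n-4-k$, note that $0\le m\le n-3\le n-2$ so that the first-half analysis applies to $N_m$, and observe that since $2n-4$ is even we have $m\equiv k\pmod 2$. Consequently every element of $N_k=N_m$ is again congruent to $k$ modulo $2$, and the invariant holds across all $2n-3$ m-stages.

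With the parity invariant in hand the conclusion is immediate: if $k$ is odd then all elements of $N_k$ are odd, while if $k'$ is even then all elements of $N_{k'}$ are even, so $N_k\cap N_{k'}=\varnothing$ and hence $E_k\cap E_{k'}=\varnothing$. The ``in particular'' statement then follows because consecutive indices $k$ and $k+1$ always have opposite parity, so one of them plays the role of the odd index and the other the even index.

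The only step demanding care—and the one I would treat as the main (admittedly minor) obstacle—is the folded regime $n-1\le k\le 2n-4$: one must confirm both that the reflection keeps the argument $2n-4-k$ inside the range $[0,n-2]$ where \eqref{eq:Nk} applies, and that the reflection preserves parity. The latter hinges solely on $2n-4$ being even, which is exactly what makes the parity of $N_k$ track $k$ uniformly through both halves of the transformation.
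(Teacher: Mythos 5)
Your proof is correct: the parity invariant (every $j\in N_k$ satisfies $j\equiv k\pmod 2$, extended to the folded regime because $2n-4$ is even so the reflection $k\mapsto 2n-4-k$ preserves parity and lands in $[0,n-2]$) immediately gives disjointness of $N_k$ and $N_{k'}$ for indices of opposite parity, and hence of $E_k$ and $E_{k'}$ since each edge is determined by its left endpoint. The paper in fact states Lemma~\ref{lem:alternating} without proof, and your argument is precisely the short parity observation that the definitions in \eqref{eq:Nk} and \eqref{eq:Ek} are set up to make immediate; there is nothing to add or correct.
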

The lemma will play an important role in our design of optimal transformation for general DPQA architectures. 

While Maslov's transformation is optimal for linear superconducting architectures, it relies heavily on SWAP gates to facilitate interactions between non-adjacent qubits. In NAQC, however, SWAP gates can be replaced with atom movements, which offer a more flexible and potentially more efficient alternative. This key difference underscores the need for new compilation strategies tailored to the unique capabilities of NAQC platforms.

In the next section, we explore how these capabilities can be leveraged to develop efficient compilation strategies for NAQC platforms.

\section{Optimal Transformation of QFT Circuits in DPQA}\label{sec:opt}

In this section, we present optimal compilation strategies for implementing QFT circuits on DPQA architectures. We begin by estimating the theoretical lower bound on atom movements required for QFT circuits. Next, we introduce an optimal transformation for linear DPQA architectures, which serves as the foundation for our approach. Finally, we extend this transformation to 2D grid-like architectures, demonstrating how to maintain efficiency while adapting to practical hardware constraints. By leveraging the unique capabilities of DPQA, such as dynamic qubit reconfigurability, our methods achieve theoretical lower bounds in movement counts while preserving high circuit fidelity.

\subsection{Replacing SWAP Gates with Atom Movement}

In superconducting architectures, SWAP gates are commonly used to enable interactions between non-adjacent qubits. However, on DPQA architectures, SWAP gates can be entirely replaced with atom movements, which are more efficient and do not require additional gate operations. This is made possible by the dynamic reconfigurability of qubits in DPQA, where atoms can be physically moved to facilitate interactions.

Since the \(CZ\) gate is the native two-qubit gate on DPQA architectures, we first decompose the controlled-phase gates in the QFT circuit into \(CZ\) gates. As shown in Fig.~\ref{fig:decom-cp}, each controlled-phase gate is decomposed into two \(CZ\) gates. Consequently, the QFT-$n$ circuit, which originally consists of $2n-3$ layers of controlled-phase gates, is transformed into $2(2n-3)$ layers of \(CZ\) gates.

Recall from Section~\ref{sec:meet-interact-separate} 
that the execution of a \(CZ\) gate on DPQA involves three steps: 1) moving atoms to the same grid point (meet step), 2) applying a global Rydberg laser to perform the gate (interact step), and 3) separating the atoms (separate step). Each execution of \(CZ\) gates thus requires at least two atom movements: one for the meet step and one for the separate step. Importantly, the ``separate'' step does not require the interacted atoms to be returned to their original positions. Instead, their positions can be exchanged, as illustrated in Fig.~\ref{fig:atom-move}. Shortly, we will see that this flexibility allows us to simulate SWAP operations without additional overhead, further optimizing the circuit.

For the QFT-$n$ circuit, which has $2(2n-3)$ layers of \(CZ\) gates, the \textbf{theoretical lower bound} on the total number of atom moves is $4(2n-3)$. Atom movements can potentially be fully parallelized across gates within the same layer, minimizing the total distance traveled by the atoms. Achieving this bound requires careful scheduling of atom movements to ensure that all gates in a layer are executed with minimal overhead.

\subsection{Optimal Transformation in a Linear DPQA Architecture}\label{sec:linear}

In this subsection, we propose an optimal transformation for QFT circuits on a linear DPQA architecture, which can be viewed as a $1 \times n$ grid. Each grid point is represented by its coordinate $P_i \equiv (i, 0)$, and the unit distance between consecutive grid points is $d$.

The transformation builds on Maslov's approach for linear superconducting architectures, where SWAP gates are inserted to enable interactions between non-adjacent qubits (see Fig.~\ref{fig:maslov}). In DPQA, we replace these SWAP gates with atom movements, which are more efficient. Specifically, for each m-stage (mapping stage) in Maslov's transformation, the layer of controlled-phase gates is replaced with two identical layers of $CZ$ gates. We then perform the following steps for each m-stage:
\begin{enumerate}[label=\alph*.]
    \item \textbf{First \(\mathit{CZ}\)-layer}:
        \begin{itemize}
            \item Move the atoms involved in the gates to the same grid point (meet step).
            \item Apply the gates using a global Rydberg laser (interact step).
            \item Move the atoms back to their original positions (separate step in restore mode).
        \end{itemize}
    \item \textbf{Second \(\mathit{CZ}\)-layer}:
        \begin{itemize}
            \item Move the atoms involved in the gates to the same grid point (meet step).
            \item Apply the gates using a global Rydberg laser (interact step).
            \item Swap the positions of the atoms (separate step in swap mode).
        \end{itemize}
\end{enumerate}

This approach ensures that all \(CZ\) gates are executed with minimal atom movements, achieving the theoretical lower bound of $4(2n-3)d$. The transformation is optimal because it parallelizes movements within each m-stage and minimizes the total distance traveled by the atoms.

More precisely, for edges \(e_j = (j, j+1)\) with \(j \in E_k\) (as defined in \eqref{eq:Ek}), the atom movements are performed as follows: in Step a, we move all smaller-indexed atoms towards the larger-indexed atoms and move them back; in Step b, we move all smaller-indexed atoms towards the large-indexed atoms and swap their positions. 

The movement directions are consistent within each batch of atom movements, and each move covers only a unit distance \(d\). This ensures that the total maximal distance traveled by the atoms is \(4(2n-3)d\), which matches the theoretical lower bound. Therefore, the transformation on the linear DPQA architecture is optimal.

As a consequence, we have the following result:
\begin{theorem}
    Let \(M\) be a linear DPQA architecture of \(n\) qubits. The transformation described above for QFT-\(n\) is an optimal transformation on \(M\).
\end{theorem}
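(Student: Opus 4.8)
The plan is to prove optimality in two parts: (i) correctness, that the described procedure indeed realizes QFT-$n$, and (ii) cost-optimality, that its total movement equals the lower bound $4(2n-3)d$ established above. For (i) I would invoke Maslov's transformation (Fig.~\ref{fig:maslov}), whose correctness is assumed, and argue that our procedure is logically equivalent to it m-stage by m-stage. Two points need checking: first, that replacing each controlled-phase layer by two $CZ$-layers together with the interleaved single-qubit gates of Fig.~\ref{fig:decom-cp} reproduces the controlled-phase unitaries (these single-qubit gates are free Raman pulses, irrelevant to movement); second, that performing the separate step of the second $CZ$-layer in \emph{swap mode} realizes the logical SWAP of that stage. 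The latter is the crux of correctness: physically exchanging the two atoms on a relevant edge merely relabels which logical qubit occupies each grid point, inducing exactly the permutation $\tau_k\mapsto\tau_{k+1}$ that Maslov's SWAP induces, at no gate cost. Each m-stage therefore matches Maslov's, and the full circuit computes QFT-$n$.

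For (ii) I would first count $CZ$-layers: QFT-$n$ has $2n-3$ controlled-phase layers, each doubled, giving $2(2n-3)$ $CZ$-layers; by the meet--interact--separate decomposition of Section~\ref{sec:meet-interact-separate}, each layer uses exactly one meet move and one separate move, so $4(2n-3)$ big moves in total. It then remains to verify that every big move spans exactly the unit distance $d$. Here I would maintain the invariant that after each m-stage every atom again sits at some grid point $P_i$; this holds at stage $0$ and is preserved because the restore step of the first layer returns atoms to their starting sites, while the swap step of the second merely interchanges two occupied adjacent sites. Granting the invariant, each relevant edge $(j,j+1)\in E_k$ joins neighbouring grid points, so both the meet move (lower-index atom toward its partner, in the $+x$ direction) and the separate move (an atom returning to $P_j$, in the $-x$ direction) span exactly $d$.

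Finally I would show the moves within each batch parallelize. By \eqref{eq:Nk} the indices of $N_k$ are pairwise at least two apart, so the edges $E_k$ of \eqref{eq:Ek} form a vertex-disjoint matching. Hence translating all lower-index atoms by $+d$ sends each exactly onto its (distinct) partner's site with no atom crossing another, so a single AOD translation realizes the meet step; the separate step, a uniform $-d$ translation, is symmetric. Each of the $2(2n-3)$ layers thus costs maximal distance $2d$, for the total $4(2n-3)d$, matching the lower bound, and the transformation is therefore optimal. I expect the collision-avoidance check to be the main obstacle: one must confirm that co-locating a pair at a shared site (via the small offset of Fig.~\ref{fig:atom-move}) and then separating them never forces an atom across a neighbour, and it is precisely the disjointness of $E_k$ that guarantees this.
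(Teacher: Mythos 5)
Your proposal is correct and follows essentially the same route as the paper: the paper likewise establishes optimality by noting that each of the $2(2n-3)$ $CZ$-layers costs exactly one meet and one separate move, each spanning the unit distance $d$ and parallelizable within a batch because all movements share a direction, thereby matching the lower bound $4(2n-3)d$. Your additions --- the explicit correctness argument via equivalence to Maslov's transformation, the invariant that atoms return to grid points after each m-stage, and the observation that $E_k$ is a vertex-disjoint matching (since indices in $N_k$ differ by at least two) --- make rigorous what the paper states only informally, but they do not change the underlying argument.
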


To illustrate the transformation, we revisit the QFT-5 circuit. 

\begin{example}
The QFT-5 circuit consists of 7 m-stages and 14 layers of \(CZ\) gates. Initially, the logical qubits \(q_i\) are mapped to the atom located at the grid point \(P_i = (i, 0)\) for \(0 \leq i \leq 4\) on the linear architecture. Fig.~\ref{fig:qft5-lnn} illustrates the mapping stages and movement sequences for QFT-5. 

For example, at m-stage 3, the set of edges \(E_3 = \set{(1,2), (3,4)}\) indicates that atom movements are performed along these edges. Specifically:
\begin{itemize}
    \item In the first layer of \(CZ\) gates (Step a), the atoms at \(P_1\) and \(P_3\) are moved (along the same direction), respectively, to \(P_2\) and \(P_4\), the \(CZ\) gates are applied, and the atoms are returned to their original positions. 
    \item In the second layer of \(CZ\) gates (Step b), the atoms at \(P_1\) and \(P_3\) are moved (along the same direction), respectively, to \(P_2\) and \(P_4\), the \(CZ\) gates are applied, and their positions are swapped. 
\end{itemize}

The red edges in Fig.~\ref{fig:qft5-lnn} highlight the atom movements along the edges \(e \in E_k\) for each m-stage \(k\) (where \(0 \leq k \leq 6\)), demonstrating how the transformation achieves the theoretical lower bound on movement counts.
\end{example}

\begin{figure}[hbtp]
    \centering
    \resizebox{!}{30ex}{\newcommand{\drawcircuitQFTF}[5]{
    \coordinate (q0) at #1;
    \coordinate (q1) at #2;
    \coordinate (q2) at #3;
    \coordinate (q3) at #4;
    \coordinate (q4) at #5;

    \foreach \i in {0,1,...,4} {
        \node[point] (q\i) at (q\i) {}; 
        \node[pointLabel] at (q\i) {q\i};
    }
}
\begin{tikzpicture}[scale=1,
    point/.style={
        fill=blue!30,       
        draw=black,       
        shape=circle,     
        minimum size=10pt, 
        inner sep=0pt
    },
    pointLabel/.style={
        above left,        
        font=\large        
    }]
    \draw[thick,->] (-1.2,-0.4) -- (-1.2, 4.9) node[left] {y};
    \draw[thick,->] (-1.2,-0.4) -- (7.9, -0.4) node[right] {x};
    \foreach \x in {0,1.2,2.4,3.6,4.8,6,7.2} {
       \draw (\x,-0.4) -- (\x,-0.3);
    }
    \node[below] at (0,-0.4) {0};
    \node[below] at (1.2,-0.4) { 1};
    \node[below] at (2.4,-0.4) { 2};
    \node[below] at (3.6,-0.4) {3};
    \node[below] at (4.8,-0.4) {4};
    \node[below] at (6,-0.4) {5};
    \node[below] at (7.2,-0.4) {6};
    \begin{scope}[shift={(-1.2,0)}]
        \foreach \i in {0,1,...,4} {
            \node[point,fill=blue!10,dashed] at (0,\i) {};
        }
    \end{scope}
    
    \begin{scope}[shift={(0,0)}]
        \drawcircuitQFTF{(0,4)}{(0,3)}{(0,2)}{(0,1)}{(0,0)}
        \draw[red, thick] (q0) -- (q1);
    \end{scope}
    
    \begin{scope}[shift={(1.2,0)}]
        \drawcircuitQFTF{(0,3)}{(0,4)}{(0,2)}{(0,1)}{(0,0)}
        \draw[red, thick] (q0) -- (q2);
    \end{scope}
    
    \begin{scope}[shift={(2.4,0)}]
        \drawcircuitQFTF{(0,2)}{(0,4)}{(0,3)}{(0,1)}{(0,0)}
        \draw[red, thick] (q0) -- (q3);
        \draw[red, thick] (q1) -- (q2);
    \end{scope}

    \begin{scope}[shift={(3.6,0)}]
        \drawcircuitQFTF{(0,1)}{(0,3)}{(0,4)}{(0,2)}{(0,0)}
        \draw[red, thick] (q0) -- (q4);
        \draw[red, thick] (q1) -- (q3);
    \end{scope}
    
    \begin{scope}[shift={(4.8,0)}]
        \drawcircuitQFTF{(0,0)}{(0,2)}{(0,4)}{(0,3)}{(0,1)}
        \draw[red, thick] (q1) -- (q4);
        \draw[red, thick] (q2) -- (q3);
    \end{scope}

    \begin{scope}[shift={(6,0)}]
        \drawcircuitQFTF{(0,0)}{(0,1)}{(0,3)}{(0,4)}{(0,2)}
        \draw[red, thick] (q2) -- (q4);
    \end{scope}

    \begin{scope}[shift={(7.2,0)}]
        \drawcircuitQFTF{(0,0)}{(0,1)}{(0,2)}{(0,4)}{(0,3)}
        \draw[red, thick] (q3) -- (q4);
    \end{scope}
\end{tikzpicture}}
    \caption{Mapping stages (m-stages) and movement sequences for QFT-5 on the linear architecture, where the $x$-axis represents the mapping stage and the $y$-axis indicates the atom locations in the linear architecture. The red edges at each m-stage \(x\) (for \(0 \le x \le 6\)) highlight atom movements along edges \(e \in E_x\). 
    }
    \label{fig:qft5-lnn}
\end{figure}

\subsection{Optimal Transformation on a Grid Architecture}
\label{sec:grid-transform}

While the linear architecture achieves minimal movement cost, it requires qubits to be arranged in a one-dimensional array, which becomes impractical for large numbers of qubits. To address this, we propose folding the linear architecture into a two-dimensional grid while preserving the efficiency of the linear transformation. This is achieved through a technique called \emph{zigzag folding}, which maps the linear arrangement of qubits onto a grid in a way that maintains adjacency and enables parallel gate execution.

\begin{definition}[Zigzag Folding]
\label{def-zigzag}
Let $P$ be a linear DPQA architecture of $n$ qubits $q_0, q_1, \ldots, q_{n-1}$, and let $M$ be a DPQA architecture with grid dimensions $m_1 \times m_2$. A \emph{zigzag folding} $\Phi$ of $P$ into $M$ is an injective mapping that assigns a unique grid point $(x_i, y_i)$ to each qubit $q_i$ such that:
\begin{itemize}
    \item \textbf{Neighboring Placement:} $\Phi(q_i)$ is a neighbor of $\Phi(q_{i-1})$ in $M$, meaning $|x_i - x_{i-1}| + |y_i - y_{i-1}| = 1$.
    \item \textbf{Alternating Direction:} If $q_i$ is placed horizontally relative to $q_{i-1}$, then $q_{i+1}$ (if $i < n-1$) is placed vertically relative to $q_i$, and vice versa.
\end{itemize}
\end{definition}

A zigzag folding ensures that neighboring qubits in the linear arrangement remain adjacent on the grid, enabling efficient gate execution. By Lemma~\ref{lem:alternating}, the gate operations and atom movements in each m-stage correspond to either horizontal or vertical edges in the zigzag folding, allowing for parallel execution.

\begin{theorem}\label{thm:zigzag_trans}
    Let $\Phi$ be a zigzag folding of a linear DPQA architecture of $n$ qubits into a grid DPQA architecture $M$. Then QFT-$n$ has an optimal transformation on $M$ as follows:
    \begin{enumerate}
        \item Initially, map each $q_i$ to $\Phi(q_i)$.
        \item For each m-stage $k$, let $E_k$ be the set of relevant edges as specified in \eqref{eq:Ek}. If the edges in $E_k$ are horizontal (vertical, resp.):
            \begin{itemize}
                \item For the first layer of $CZ$ gates in the m-stage, move all left (lower, resp.) atoms rightward (upward, resp.), execute the gates using a global Rydberg laser, and then separate the atoms by moving them back.
                \item For the second layer of $CZ$ gates in the m-stage, move all left (lower, resp.) atoms rightward (upward, resp.), execute the gates, and then separate the atoms by exchanging their positions.
            \end{itemize}
    \end{enumerate}
\end{theorem}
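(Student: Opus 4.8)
The plan is to establish Theorem~\ref{thm:zigzag_trans} by reducing it to the already-proven optimality of the linear transformation together with the structural guarantee provided by Lemma~\ref{lem:alternating}. I would organize the argument around two separate obligations: first, \emph{correctness}, i.e. that the described procedure faithfully implements the QFT-$n$ circuit; and second, \emph{optimality}, i.e. that it attains the theoretical lower bound of $4(2n-3)d$ on total movement distance. The key observation tying everything together is that the zigzag folding $\Phi$ is a graph embedding that sends the path $q_0 - q_1 - \cdots - q_{n-1}$ to a self-avoiding lattice path in $M$, so every relevant edge $e_j = (j,j+1) \in E_k$ maps to a unit grid edge that is either horizontal or vertical.

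First I would argue correctness. Since $\Phi$ preserves adjacency along the linear path, the gate acting on physical qubits $Q_j$ and $Q_{j+1}$ in the linear transformation becomes a gate between two \emph{grid-adjacent} atoms $\Phi(Q_j)$ and $\Phi(Q_{j+1})$ in $M$. The meet-interact-separate protocol (with restore mode for the first $CZ$-layer and swap mode for the second) then executes exactly the same sequence of $CZ$ gates and the same net qubit permutation per m-stage as in the linear case. Because the underlying logical circuit, the decomposition of each controlled-phase gate into two $CZ$ layers, and the per-m-stage permutations are all identical to the linear transformation already shown to implement QFT-$n$, the folded procedure computes the same unitary. I would note that $\Phi$ being injective guarantees no two atoms share a site except transiently during the interact step, so the protocol is physically well-defined.

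Next I would handle the central claim that the horizontal/vertical directions are \emph{globally consistent within each m-stage}, which is what permits the parallel batch moves and the use of AOD rows/columns. This is where Lemma~\ref{lem:alternating} is decisive: the alternating-direction property of the zigzag folding means consecutive path edges alternate between horizontal and vertical orientations, so the edges in $E_k$, which by \eqref{eq:Nk} all have indices of the same parity (they differ by multiples of $2$), map to grid edges of a \emph{single} orientation. Hence for a fixed m-stage $k$, every edge in $E_k$ is horizontal, or every edge is vertical, and ``move all left/lower atoms toward their partners'' is a coherent, collision-free AOD operation that can be applied simultaneously. I would make this parity-to-orientation correspondence explicit, as it is the linchpin of the whole construction.

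Finally I would establish optimality by a direct distance count. Each m-stage performs two $CZ$-layers, each layer requiring a meet step and a separate step; every such step moves atoms a single unit distance $d$ along one axis, for $4d$ per m-stage, and summing over the $2n-3$ m-stages yields exactly $4(2n-3)d$, matching the lower bound derived earlier. I would stress that parallelism preserves this count: within a layer, the \emph{maximal} distance (not the sum) determines cost, and since all moves are unit-distance and uniformly oriented, the maximal per-step distance is exactly $d$. I expect the main obstacle to be the parity-orientation argument of the third paragraph, specifically verifying rigorously that the alternating structure of $\Phi$ combined with the same-parity index structure of $N_k$ forces a uniform orientation across $E_k$; the correctness and distance-counting steps are essentially inherited from the linear case and should be routine once this geometric consistency is pinned down.
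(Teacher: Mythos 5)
Your proposal is correct and follows essentially the same route as the paper's proof: the parity-to-orientation correspondence (all edges in $E_k$ share one parity, hence one grid orientation under the alternating zigzag placement), parallel unit-distance batches within each m-stage, and the count of $4d$ per m-stage over $2n-3$ stages matching the lower bound. Your additional explicit correctness argument (that the folded procedure implements the same unitary as the linear transformation) is left implicit in the paper but is a reasonable elaboration, not a departure.
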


\begin{proof}
Without loss of generality, assume that \(q_1\) is placed as the right neighbor of \(q_0\) in the zigzag folding \(\Phi\). By the definition of zigzag folding, the placement of qubits alternates between horizontal and vertical directions. Consequently:
\begin{itemize}
    \item For even \(k\), all edges in \(E_k\) are horizontal.
    \item For odd \(k\), all edges in \(E_k\) are vertical.
\end{itemize}
For each m-stage \(k\), the transformation involves four batches of atom movements:
\begin{itemize}
    \item Two batches for the first layer of \(CZ\) gates (Step a).
    \item Two batches for the second layer of \(CZ\) gates (Step b).
\end{itemize}
Since all movements within a batch occur in the same direction (either horizontal or vertical), they can be performed in parallel. Furthermore, each movement covers a unit distance \(d\), ensuring that the total distance traveled by the atoms is minimized. 

The maximal movement distance for the entire transformation is \(4(2n-3)d\), which matches the theoretical lower bound. This is achieved because:
\begin{itemize}
    \item Each of the \(2(2n-3)\) layers of \(CZ\) gates requires two atom movements (one for the meet step and one for the separate step).
    \item Each movement covers a unit distance \(d\).
    \item Movements are parallelized within each m-stage, ensuring no unnecessary overhead.
\end{itemize}

Neglecting offset movements, the transformation is optimal in terms of both the big move count and the total distance traveled.
\end{proof}

This transformation achieves the same theoretical lower bound on movement counts as the linear architecture while adapting to the practical constraints of 2D grid architectures.

Consider, for example, the two zigzag embeddings shown in Fig.~\ref{fig:twozigzag} for a linear architecture of five qubits. At m-stage 3, the set of relevant edges \(E_3 = \set{(1,2), (3,4)}\) defines the required interactions between qubits. In the left zigzag folding, both edges are horizontal, while in the right zigzag folding, both are vertical. This alignment ensures that the atom movements in m-stage 3 can be performed in parallel in both foldings. This parallelization of atom movements within each m-stage ensures that the transformation achieves the theoretical lower bound on movement counts while maintaining high circuit fidelity.

\subsection{Space-Efficient Zigzag Foldings}

While the zigzag folding approach guarantees optimal transformations for QFT circuits on grid-like architectures, it is also important to ensure that the folding is space-efficient. Specifically, we aim to implement QFT-$n$ compactly on a $(w \times w)$-grid with $w = O(\sqrt{n})$. In this subsection, we demonstrate how to construct such space-efficient zigzag foldings.

\subsubsection{Construction of Space-Efficient Zigzag Foldings}

\begin{figure}[htbp]
    \centering
    \raisebox{30pt}
    {\begin{minipage}[b]{0.3\columnwidth}
        \begin{minipage}{\columnwidth}
            \centering
            \raisebox{8pt}{\begin{minipage}[b]{.45\columnwidth}
            \centering
            \resizebox{\columnwidth}{!}{\begin{tikzpicture}[scale=0.8,
    point/.style={
        fill=blue!30,       
        draw=black,       
        shape=circle,     
        minimum size=15pt, 
        inner sep=0pt
    }]

\begin{scope}[shift={(0,0)}]
    \draw[step=1, gray!50, very thin, dashed] (0,0) grid (2,2);
    
    \draw[red] (0,2) -- (0,1);
    \draw[very thick, red] (0,1) -- (1,1);
    \draw[red] (1,1) -- (1,2);
    \draw[very thick, red] (1,2) -- (2,2);
    
    \node[point] at (0,2)  {\tiny $Q_0$};
    \node[point] at (0,1) {\tiny $Q_1$};
    \node[point] at (1,1) {\tiny $Q_2$};
    \node[point] at (1,2) {\tiny $Q_3$};
    \node[point] at (2,2) {\tiny $Q_4$};
\end{scope}


\end{tikzpicture}}
            \label{fig:zz1}
            \end{minipage}}
            \begin{minipage}[b]{.45\columnwidth}
            \centering
            \resizebox{\columnwidth}{!}{\begin{tikzpicture}[scale=0.8,
    point/.style={
        fill=blue!30,       
        draw=black,       
        shape=circle,     
        minimum size=15pt, 
        inner sep=0pt
    }]

\begin{scope}[shift={(0,0)}]
    \draw[step=1, gray!50, very thin, dashed] (0,0) grid (2,2);
    
    \draw[red] (0,2) -- (1,2);
    \draw[thick,double, red] (1,2) -- (1,1);
    \draw[red] (1,1) -- (2,1);
    \draw[thick, double, red] (2,1) -- (2,0);
    
    \node[point] at (0,2)  {\tiny $Q_0$};
    \node[point] at (1,2) {\tiny $Q_1$};
    \node[point] at (1,1) {\tiny $Q_2$};
    \node[point] at (2,1) {\tiny $Q_3$};
    \node[point] at (2,0) {\tiny $Q_4$};
\end{scope}

\end{tikzpicture}}
            \label{fig:zz2}
            \end{minipage}
            \vspace{-20pt}
            \subcaption{}
            \label{fig:twozigzag}
        \end{minipage}
        \vspace{30pt}
        
        \begin{minipage}{\columnwidth}
            \centering
            \begin{minipage}[b]{0.1\columnwidth}
            \centering
            \resizebox{\columnwidth}{!}{
            \begin{tikzpicture}
              \draw (0,1) -- (0,0);
              \draw (0,1) -- (1,1);
              \draw (1,0) -- (1,1);
            \end{tikzpicture}
            }
            \subcaption{}
            \label{fig:Phi1}
        \end{minipage}
            \hspace{0.1\columnwidth}
            \begin{minipage}[b]{0.3\columnwidth}
            \centering
            \resizebox{\columnwidth}{!}{
            \begin{tikzpicture}
              \foreach \i in {1,2,3} {
                \draw (\i,4) -- (\i,3);
                \draw (\i,2) -- (\i,1);
                }
              \draw (4,2) -- (4,3);
              \draw (2,4) -- (3,4);
              \draw (2,1) -- (3,1);
              \foreach \j in {2,3} {
                \draw (1,\j) -- (2,\j);
                \draw (3,\j) -- (4,\j);
                }
            
              \fill[gray!40,draw=black,densely dotted] (4,1) circle (3pt);
              \fill[gray!40,draw=black,densely dotted] (4,4) circle (3pt);
            \end{tikzpicture}
            }
            \subcaption{}
            \label{fig:Phi2}
        \end{minipage}            
        \end{minipage}
    \end{minipage}}
    \begin{minipage}[b]{0.3\columnwidth}
    \centering
    \begin{tikzpicture}[scale=0.5]
      \fill[red!20,draw=red] (1.8,1.8) rectangle (7.2,7.2);
      \fill[blue!20,draw=blue] (3.8,3.8) rectangle (5.2,5.2);
      \draw (5,4) -- (5,5) -- (4,5) -- (4,4) -- (3,4) -- (3,5) -- (2,5) -- (2,6) -- (3,6) -- (3,7) -- (4,7) -- (4,6) -- (5,6) -- (5,7) -- (6,7) -- (6,6) -- (7,6) -- (7,5) -- (6,5) -- (6,4) -- (7,4) -- (7,3) -- (6,3) -- (6,2) -- (5,2) -- (5,3) -- (4,3) -- (4,2) -- (3,2) -- (3,3) -- (2,3) -- (2,2) -- (1,2) -- (1,3) -- (0,3) -- (0,4) -- (1,4) -- (1,5) -- (0,5) -- (0,6) -- (1,6) -- (1,7) -- (0,7) -- (0,8) -- (1,8) -- (1,9) -- (2,9) -- (2,8) -- (3,8) -- (3,9) -- (4,9) -- (4,8) -- (5,8) -- (5,9) -- (6,9) -- (6,8) -- (7,8) -- (7,9) -- (8,9) -- (8,8) -- (9,8) -- (9,7) -- (8,7) -- (8,6) -- (9,6) -- (9,5) -- (8,5) -- (8,4) -- (9,4) -- (9,3) -- (8,3) -- (8,2) -- (9,2) -- (9,1) -- (8,1) -- (8,0) -- (7,0) -- (7,1) -- (6,1) -- (6,0) -- (5,0) -- (5,1) -- (4,1) -- (4,0) -- (3,0) -- (3,1) -- (2,1) -- (2,0) -- (1,0) -- (1,1) -- (0,1) -- (0,0);
      \fill[gray!40,draw=black,densely dotted] (9,0) circle (4pt);
      \fill[gray!40,draw=black,densely dotted] (9,9) circle (4pt);
      \fill[gray!40,draw=black,densely dotted] (0,9) circle (4pt);
      \fill[gray!40,draw=black,densely dotted] (0,2) circle (4pt);
      \fill[gray!40,draw=black,densely dotted] (7,2) circle (4pt);
      \fill[gray!40,draw=black,densely dotted] (7,7) circle (4pt);
      \fill[gray!40,draw=black,densely dotted] (2,7) circle (4pt);
      \fill[gray!40,draw=black,densely dotted] (2,4) circle (4pt);
    \end{tikzpicture}
    \subcaption{}
    \label{fig:Phi5}
    \end{minipage}
    \begin{minipage}[b]{0.35\columnwidth}
        \centering
    \begin{tikzpicture}[scale=0.5]
      \fill[red!20,draw=red] (1.8,1.8) rectangle (9.2,9.2);
      \fill[blue!20,draw=blue] (3.8,3.8) rectangle (7.2,7.2);
      \draw (4,7) -- (4,6) -- (5,6) -- (5,7) -- (6,7) -- (6,6) -- (7,6) -- (7,5) -- (6,5) -- (6,4) -- (5,4) -- (5,5) -- (4,5) -- (4,4) -- (3,4) -- (3,5) -- (2,5) -- (2,6) -- (3,6) -- (3,7) -- (2,7) -- (2,8) -- (3,8) -- (3,9) -- (4,9) -- (4,8) -- (5,8) -- (5,9) -- (6,9) -- (6,8) -- (7,8) -- (7,9) -- (8,9) -- (8,8) -- (9,8) -- (9,7) -- (8,7) -- (8,6) -- (9,6) -- (9,5) -- (8,5) -- (8,4) -- (9,4) -- (9,3) -- (8,3) -- (8,2) -- (7,2) -- (7,3) -- (6,3) -- (6,2) -- (5,2) -- (5,3) -- (4,3) -- (4,2) -- (3,2) -- (3,3) -- (2,3) -- (2,2) -- (1,2) -- (1,3) -- (0,3) -- (0,4) -- (1,4) -- (1,5) -- (0,5) -- (0,6) -- (1,6) -- (1,7) -- (0,7) -- (0,8) -- (1,8) -- (1,9) -- (0,9) -- (0,10) -- (1,10) -- (1,11) -- (2,11) -- (2,10) -- (3,10) -- (3,11) -- (4,11) -- (4,10) -- (5,10) -- (5,11) -- (6,11) -- (6,10) -- (7,10) -- (7,11) -- (8,11) -- (8,10) -- (9,10) -- (9,11) -- (10,11) -- (10,10) -- (11,10) -- (11,9) -- (10,9) -- (10,8) -- (11,8) -- (11,7) -- (10,7) -- (10,6) -- (11,6) -- (11,5) -- (10,5) -- (10,4) -- (11,4) -- (11,3) -- (10,3) -- (10,2) -- (11,2) -- (11,1) -- (10,1) -- (10,0) -- (9,0) -- (9,1) -- (8,1) -- (8,0) -- (7,0) -- (7,1) -- (6,1) -- (6,0) -- (5,0) -- (5,1) -- (4,1) -- (4,0) -- (3,0) -- (3,1) -- (2,1) -- (2,0) -- (1,0) -- (1,1) -- (0,1) -- (0,0);
      \fill[gray!40,draw=black,densely dotted] (11,0) circle (4pt);
      \fill[gray!40,draw=black,densely dotted] (11,11) circle (4pt);
      \fill[gray!40,draw=black,densely dotted] (0,11) circle (4pt);
      \fill[gray!40,draw=black,densely dotted] (0,2) circle (4pt);
      \fill[gray!40,draw=black,densely dotted] (9,2) circle (4pt);
      \fill[gray!40,draw=black,densely dotted] (9,9) circle (4pt);
      \fill[gray!40,draw=black,densely dotted] (2,9) circle (4pt);
      \fill[gray!40,draw=black,densely dotted] (2,4) circle (4pt);
      \fill[gray!40,draw=black,densely dotted] (7,4) circle (4pt);
      \fill[gray!40,draw=black,densely dotted] (7,7) circle (4pt);
    \end{tikzpicture}
    \subcaption{}
    \label{fig:Phi6}
    \end{minipage}
    \caption{
    (\subref{fig:twozigzag}) Two zigzag foldings of the same five-qubit linear architecture into a \(3 \times 3\) grid. 
    At m-stage 3, the relevant edges \(\{(1,2),(3,4)\}\) run horizontally in the left folding and vertically in the right one. 
    In both cases, movements along these edges can occur in parallel, preserving the optimal total big moves.
    (\subref{fig:Phi1}) illustrates $\Phi_{1}$, while (\subref{fig:Phi2}) depicts $\Phi_{2}$. (\subref{fig:Phi5}) presents the zigzag path $\Phi_{5}$ within a $10 \times 10$ grid, where the red and blue boxes highlight the configurations of $\Phi_{3}$ and $\Phi_{1}$, respectively. Similarly, (\subref{fig:Phi6}) showcases the zigzag path $\Phi_{6}$ in a $12 \times 12$ grid, with the red and blue boxes highlighting the configurations of $\Phi_{4}$ and $\Phi_{2}$, respectively. The unvisited (or ``wasted'') grid points are highlighted.}
\label{fig:Phi}
\end{figure}
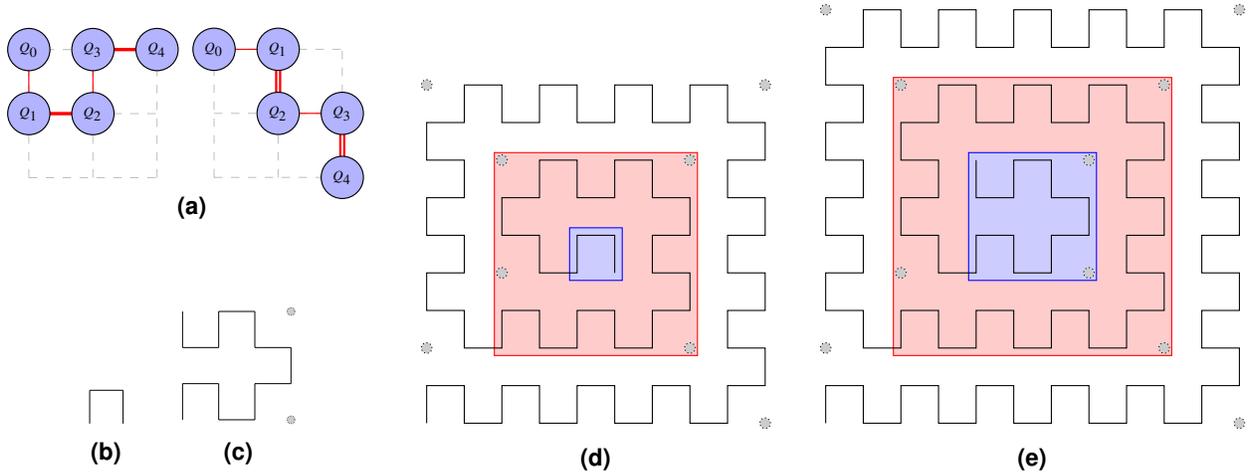

For each integer $w$, we construct a zigzag folding $\Phi_w$ in a $(2w \times 2w)$-grid. The construction proceeds as follows:
\begin{itemize}
    \item For $w = 1$, the zigzag folding $\Phi_1$ occupies 4 grid points in a $2 \times 2$ grid. The path starts at $P_0 = (0, 0)$, moves up to $P_1 = (0, 1)$, then right to $P_2 = (1, 1)$, and finally down to $P_3 = (1, 0)$. See Fig.~\ref{fig:Phi2} for an illustration.
    \item For $w = 2$, the zigzag folding $\Phi_2$ occupies 12 grid points in a $4 \times 4$ grid. See Fig.~\ref{fig:Phi2} for an illustration, where only two grid points are wasted.
    \item For $w \geq 3$, the zigzag folding $\Phi_w$ is constructed by extending $\Phi_{w-2}$ from a $(2w-4) \times (2w-4)$ grid to a $(2w \times 2w)$ grid. The path alternates between horizontal and vertical movements, ensuring that neighboring qubits in the linear arrangement remain adjacent on the grid. See Fig.~\ref{fig:Phi5} and Fig.~\ref{fig:Phi6} for illustrations of $\Phi_5$ and $\Phi_6$, respectively.
\end{itemize}
We note that the location does not matter and a configuration can be translated or properly rotated in the plane.

\subsubsection{Space Efficiency Analysis}

Let $\phi(m)$ denote the number of empty grid points (i.e., grid points not visited by the path $\Phi_m$) in a $(2m \times 2m)$ grid. We observe that:
\begin{itemize}
    \item $\phi(1) = 0$ (all grid points are occupied).
    \item $\phi(2) = 2$ (two grid points are empty).
    \item For $m \geq 1$, $\phi(m+2) = \phi(m) + 4$.
\end{itemize}

By induction, it follows that $\phi(m) \leq 2m$. This implies that the number of wasted grid points grows linearly with $m$, while the total number of grid points grows quadratically. As a result, the space efficiency of the zigzag folding improves as $n$ increases.

\begin{lemma}
    For any $n$, we can construct a zigzag folding path of $n$ atoms in a $(2m+2) \times (2m+2)$ grid such that the spatial efficiency approaches 1 as $n$ increases, where $m =O(\sqrt{n})$.
\end{lemma}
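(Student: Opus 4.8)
The plan is to reduce the case of an arbitrary atom count $n$ to the explicit family $\{\Phi_w\}$ constructed above, by taking a suitable prefix of one of these paths. Write $W(w) = 4w^2 - \phi(w)$ for the number of grid points visited by $\Phi_w$ in the $(2w)\times(2w)$ grid; the recursion $\phi(m+2)=\phi(m)+4$ together with $\phi(1)=0$ and $\phi(2)=2$ yields the bound $\phi(w)\le 2w$ by induction, so that $4w^2 - 2w \le W(w) \le 4w^2$.

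First I would observe that any prefix of a zigzag folding is again a zigzag folding. Both conditions in Definition~\ref{def-zigzag} (Neighboring Placement and Alternating Direction) are local constraints on consecutive qubits, and injectivity of a prefix is inherited from injectivity of the full path. Hence placing the $n$ atoms on the first $n$ vertices of any $\Phi_w$ with $W(w)\ge n$ yields a valid zigzag folding of $n$ atoms inside the $(2w)\times(2w)$ grid, without disturbing the parallelism guarantees used in Theorem~\ref{thm:zigzag_trans}.

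Next I would fix, for a given $n$, the \emph{smallest} integer $w$ with $W(w)\ge n$, and set $m=w-1$ so that the host grid is $(2m+2)\times(2m+2)=(2w)\times(2w)$. Minimality gives $W(w-1)<n$, i.e. $n> 4(w-1)^2-\phi(w-1)\ge 4(w-1)^2-2(w-1)$, where the last step uses the linear bound on $\phi$. Combining this lower bound on $n$ with the grid area $(2m+2)^2=4w^2$, the spatial efficiency $\eta = n/(4w^2)$ satisfies
\[
\eta \;>\; \frac{4(w-1)^2-2(w-1)}{4w^2}\;=\;1-\frac{10w-6}{4w^2}\;\xrightarrow[w\to\infty]{}\;1 .
\]
The same inequality $4(w-1)^2-2(w-1)<n$ forces $(w-1)^2=O(n)$, hence $m=w-1=O(\sqrt n)$, which establishes the claimed grid dimensions.

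The main obstacle — and the only place where genuine care is needed — is the efficiency estimate in the worst case, where $n$ lies just above $W(w-1)$: there we are forced to open the larger grid of $\Phi_w$ while filling it only to level roughly $W(w-1)$. The argument succeeds precisely because the number of wasted points $\phi(w)$ grows only linearly in $w$, whereas the grid area grows quadratically, so the ratio $W(w-1)/(4w^2)$ still tends to $1$. This quadratic-versus-linear gap is the crux, and it is already guaranteed by the bound $\phi(w)\le 2w$ established above.
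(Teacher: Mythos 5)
Your proof is correct and follows essentially the same route as the paper's: embed the $n$ atoms along (a prefix of) an explicit folding $\Phi_w$, invoke the linear bound $\phi(w)\le 2w$ on wasted points against the quadratic grid area, and show the ratio $n/(2m+2)^2$ tends to $1$ with $m=O(\sqrt n)$. The only difference is that the paper fixes $m=\lceil\sqrt n/2\rceil$ outright while you select the minimal sufficient $w$ and derive the efficiency bound from minimality; your version is marginally more careful (e.g., explicitly noting that a prefix of a zigzag folding is a zigzag folding, and bounding the number of \emph{visited} rather than total grid points) but not a different argument.
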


\begin{proof}
    Let $m = \lceil \sqrt{n}/2 \rceil$. The zigzag folding $\Phi_{m+1}$ in a $(2m+2) \times (2m+2)$ grid has at most $2m+2$ wasted grid points. Since $4m^2 \geq n$, the total number of grid points is at least $n$. The space efficiency is given by:
    \[
    \frac{n}{4(m+1)^2} \geq \frac{n}{4(2 + \sqrt{n}/2)^2} = \frac{1}{1 + \frac{8}{\sqrt{n}} + \frac{16}{n}},
    \]
    which approaches 1 as $n$ increases.
\end{proof}

\subsubsection{Practical Implications}

By compactly folding the linear architecture into a grid, we adapt the efficient movement strategies for linear DPQA architectures to a practical two-dimensional layout suitable for NAQC architectures. This ensures that the transformation remains optimal in terms of movement counts while minimizing the physical space required for qubit placement.

\section{Evaluation and Comparison} \label{sec:eva}

In this section, we evaluate the performance of our proposed methods (called Linear and Zigzag Paths, respectively) for transforming QFT circuits on NAQC architectures. We compare our approaches with state-of-the-art compilation strategies, focusing on movement counts, cumulative movement distances, and overall fidelity. Our evaluation aims to demonstrate the effectiveness of our methods in minimizing resource overheads while maintaining high-fidelity quantum operations.

We implement Linear and Zigzag Paths in Python and employ Enola's CodeGen framework \cite{tan2024compilation} to generate gate execution and schedule movements. All experiments were conducted on a system running Ubuntu 22.04, equipped with a 40-core Intel Xeon Gold 5215 processor at 2.50 GHz and 512 GB of RAM.

\subsection{The Compared State-of-the-art Compilers}
We compare our methods with the following state-of-the-art compilers for NAQC:
\begin{itemize} 
    \item \textbf{Enola \cite{tan2024compilation}}: A leading compiler for DPQA architectures. Enola adapts qubit connections dynamically during computation and schedules the circuit using a minimal number of Rydberg stages, such that \(CZ\) gates in each stage are executed in parallel using a global Rydberg laser. [GitHub Repository: \href{https://github.com/UCLA-VAST/Enola}{https://github.com/UCLA-VAST/Enola}]
    
    \item \textbf{Atomique \cite{wang2024atomique}}: Another leading compiler for DPQA architectures. Atomique employs multiple AOD arrays to enable parallel movements and enhance movement efficiency. This approach eliminates the need for atom transfers but incurs the cost of extra SWAP gates when qubits in the same array interact. Our setup for Atomique uses two AOD arrays. [Zenodo record: \href{https://zenodo.org/records/10995324}{https://zenodo.org/records/10995324}] 

    \item \textbf{DasAtom \cite{huang2024dasatom}}: A compiler for NAQC architectures where \(CZ\) gates are executed \emph{individually} using local Rydberg lasers. DasAtom exploits long-range interactions to reduce the need for atom movements and has demonstrated excellent performance in improving the overall fidelity. Our setup for DasAtom assumes an atom distance of 3 $\mu$m and an interaction radius of 6 $\mu$m. [GitHub Repository: \href{https://github.com/Huangyunqi/DasAtom}{https://github.com/Huangyunqi/DasAtom}]
\end{itemize}

\subsection{Key Performance Metrics}
We evaluate these methods based on three critical metrics: 
\begin{enumerate}[label=(\arabic*)]
    \item \textbf{Big Move Counts}: The number of significant positional shifts of qubits during circuit execution. Minimising this reduces operational overhead and potential errors.
    
    \item \textbf{Cumulative Movement Distances}: The total distance qubits travel, including both big moves and short offsets executed to resolve placement conflicts (i.e., preventing qubits from occupying the same physical location).
    
    \item \textbf{Overall Fidelity}:  Calculated based on errors from two-qubit gate error, global Rydberg excitation error, atom transfer error, and decoherence error, following the fidelity model outlined in ~\cite{tan2024compilation}: 
\begin{align}
\label{eq:error}
    f = \;\underbrace{\left(f_{2}\right)^{g_{2}}}_{\text{two-qubit gate}} 
    \times \underbrace{ \left(f_{\text{exc}}\right)^{|Q|S - 2g_{2}}}_{\text{Rydberg excitation}}
    \;\times\; \underbrace{\left(f_{\text{trans}}\right)^{N_{\text{trans}}}}_{\text{atom transfer}} 
    \;\times\; \underbrace{\prod_{q \in Q} \left(1 - \frac{T_q}{T_2}\right)}_{\text{decoherence}},
\end{align}
where:
    \begin{itemize}
        \item $f_{2}$ is the two-qubit gate fidelity.
        \item $g_2$ is the number of \(CZ\) gates in the compiled circuit. Atomique may insert SWAP gates and each SWAP gate incurs three $CZ$ gates. 
        \item $Q$ is the qubit set in the compiled circuit. Atomique may use ancilla qubits. 
        \item $S$ is the number of Rydberg stages. Atomique may use more stages than the other compilers, which all use a minimal number of stages.
        \item $f_{\text{exc}}$ is the fidelity of an isolated (i.e., not interacting with another qubit) qubit in a Rydberg stage. For DasAtom, $f_{\text{exc}}=1$ as it employs individually addressible Rydberg lasers.
        \item $f_{\text{trans}}$ measures fidelity losses from atom transfers.
        \item $N_{\text{trans}}$ is the number of atom transfers.  Atomique uses no atom transfers.
        \item $T_q$ is the idling time for qubit $q$.
        \item $T_2$ is the coherence time.
    \end{itemize}
\end{enumerate}

Table~\ref{tab:para} lists the key parameters adopted in our experiments, which are also assumed in \cite{tan2024compilation}.

\begin{table}[hbt]
    \centering
    \caption{Key DPQA/NAQC parameters, where $d$ represents the default spacing (i.e., unit distance) in the SLM array, and $a$ denotes the acceleration of qubit movement.}
    \label{tab:para}
    \renewcommand{\arraystretch}{1.5} 
    \setlength{\tabcolsep}{4pt} 
    \begin{tabular}{lccccccc}
        \hline
        \textbf{Parameter} & $f_2$ & $f_{\text{exc}}$ & $f_{\text{trans}}$ & $T_2$ & $T_{\text{trans}}$ & $d$ & $a$ \\
        \hline
        \textbf{Value} & $99.5\%$ & $99.75\%$ & $99.9\%$ & $1.5\,\text{s}$ & $1.5\,\mu\text{s}$ & $15\,\mu\text{m}$ &  $2750\,\text{m/s}^2$ \\
        \hline
    \end{tabular}

\end{table}

\smallskip
\noindent
\textbf{Theoretical Lower Bounds (TLBs)} for movement counts serve as a baseline metric for optimal performance and are calculated as $4(2n-3)$. Note that the ``meet-interact-separate'' action of the last $CZ$ layer (corresponding to the last SWAP gate in Fig.~\ref{fig:maslov}) does not need to swap the two atoms  in practice, which could reduce the TLB by 1.

\subsection{Evaluations}
\label{sec:evaluations}

We focus on QFT circuits as the primary workload, which exhibit dense connectivity. Later, we extend the analysis to sparser circuits (QAOA MaxCut).

\begin{figure}[htbp]
    \begin{minipage}[c]{0.48\columnwidth}
        \centering
        \resizebox{!}{30ex}{
    \begin{tikzpicture}
        \begin{axis}[
            width=12cm,
            height=8cm,
            xlabel={Qubit number in QFT circuit},
            xlabel style={
                at={(axis description cs:0.5,-0.1)},  
                anchor=north,
                font=\Large
            },
            ylabel={Big Move Counts},
            ylabel style={
                at={(axis description cs:-0.12,0.5)},  
                anchor=south,
                rotate=0,
                font=\Large
            },
            xmin=0, xmax=55,
            xtick={5,10,...,50},
            grid=major,
            legend style={at={(0.05,1)}, anchor=north west},
            cycle list name=color list,
            mark size=1.5pt,
            line width=1pt
        ]
        
        \addplot[
            color=idealcolor,
            thick,
            mark=none,
            domain=2:55
        ] {8*x - 13};
        \addlegendentry{Theoretical Lower Bound}
        
        \addplot[LNNPlot] coordinates {
            (5, 27) (6, 35) (7, 43) (8, 51) (9, 59) (10, 67) (11, 75) (12, 83) (13, 91) (14, 99) (15, 107) (16, 115) (17, 123) (18, 131) (19, 139) (20, 147) (21, 155) (22, 163) (23, 171) (24, 179) (25, 187) (26, 195) (27, 203) (28, 211) (29, 219) (30, 227) (31, 235) (32, 243) (33, 251) (34, 259) (35, 267) (36, 275) (37, 283) (38, 291) (39, 299) (40, 307) (41, 315) (42, 323) (43, 331) (44, 339) (45, 347) (46, 355) (47, 363) (48, 371) (49, 379) (50, 387)
        };
        \addlegendentry{Linear Path}
        
        \addplot[ZZPlot] coordinates {
            (5, 27) (6, 35) (7, 43) (8, 51) (9, 59) (10, 67) (11, 75) (12, 83) (13, 91) (14, 99) (15, 107) (16, 115) (17, 123) (18, 131) (19, 139) (20, 147) (21, 155) (22, 163) (23, 171) (24, 179) (25, 187) (26, 195) (27, 203) (28, 211) (29, 219) (30, 227) (31, 235) (32, 243) (33, 251) (34, 259) (35, 267) (36, 275) (37, 283) (38, 291) (39, 299) (40, 307) (41, 315) (42, 323) (43, 331) (44, 339) (45, 347) (46, 355) (47, 363) (48, 371) (49, 379) (50, 387)
        };
        \addlegendentry{Zigzag Path}
        \addplot[EnolaDyPlot] coordinates {
            (5,30) (6,48) (7,57) (8,76) (9,99) (10,125) (11,131) (12,160) (13,197) (14,233) (15,255) (16,276) (17,307) (18,335) (19,379) (20,416) (21,447) (22,481) (23,527) (24,567) (25,598) (26,631) (27,669) (28,709) (29,753) (30,813) (31,858) (32,882) (33,924) (34,981) (35,1057) (36,1083) (37,1126) (38,1197) (39,1260) (40,1352) (41,1356) (42,1425) (43,1472) (44,1527) (45,1569) (46,1661) (47,1700) (48,1802) (49,1871) (50,1922)
        };
        \addlegendentry{Enola}
        

        \addplot [DasAtomPlot]
        coordinates {%
        ( 5 , 0) ( 6 , 3) ( 7 , 3) ( 8 , 3) ( 9 , 10) ( 10 , 12) ( 11 , 12) ( 12 , 19) ( 13 , 21) ( 14 , 24) ( 15 , 31) ( 16 , 36) ( 17 , 34) ( 18 , 39) ( 19 , 46) ( 20 , 55) ( 21 , 52) ( 22 , 60) ( 23 , 71) ( 24 , 81) ( 25 , 86) ( 26 , 84) ( 27 , 85) ( 28 , 92) ( 29 , 95) ( 30 , 106) ( 31 , 108) ( 32 , 116) ( 33 , 124) ( 34 , 123) ( 35 , 133) ( 36 , 148) ( 37 , 144) ( 38 , 161) ( 39 , 160) ( 40 , 173) ( 41 , 183) ( 42 , 186) ( 43 , 191) ( 44 , 205) ( 45 , 219) ( 46 , 220) ( 47 , 242) ( 48 , 239) ( 49 , 265) ( 50 , 230) };
        \addlegendentry{DasAtom}

        \addplot [AtomiquePlot]
        coordinates {%
        (5, 4) (6, 11) (7, 16) (8, 23) (9, 24) (10, 43) (11, 56) (12, 48) (13, 85) (14, 80) (15, 109) (16, 132) (17, 132) (18, 137) (19, 161) (20, 162) (21, 245) (22, 264) (23, 249) (24, 249) (25, 245) (26, 246) (27, 317) (28, 369) (29, 336) (30, 237) (31, 293) (32, 355) (33, 363) (34, 376) (35, 379) (36, 458) (37, 504) (38, 473) (39, 440) (40, 496) (41, 560) (42, 582) (43, 622) (44, 496) (45, 539) (46, 881) (47, 878) (48, 854) (49, 739) (50, 819) };
        \addlegendentry{Atomique}
        \end{axis}
    \end{tikzpicture}
        \subcaption{}
        \label{fig:movement_counts}
    \end{minipage}
    \begin{minipage}[c]{0.48\columnwidth}
        \centering
        \resizebox{!}{30ex}{
    \begin{tikzpicture}

        \begin{axis}[
            width=12cm,
            height=8cm,
            xlabel={Qubit number in QFT circuit},
            xlabel style={
                at={(axis description cs:0.5,-0.1)},  
                anchor=north,
                font=\Large
            },
            ylabel={\shortstack{Cumulative Movement Distances \\ (Grid Units)}},
            ylabel style={
                at={(axis description cs:-0.12,0.5)},  
                anchor=south,
                rotate=0,
                font=\large
            },
            ymode = log,
            log basis y = 10,
            xmin = 0,
            xmax = 55,
            xtick={5,10,...,50},
            grid=major,
            legend style={at={(0.95,0)}, anchor=south east},
            cycle list name=color list,
            mark size=1.5pt,
            line width=0.5pt
        ]
        
        \addplot[
            idealPlot,
            domain=2:55
        ] {8*x - 13};
        \addlegendentry{Theoretical Lower Bound}
        
        \addplot[LNNPlot] coordinates {(5 , 36.6842105263158) (6 , 47.63157894736848) (7 , 58.57894736842115) (8 , 69.52631578947383) (9 , 80.47368421052651) (10 , 91.4210526315792) (11 , 102.36842105263187) (12 , 113.31578947368456) (13 , 124.26315789473726) (14 , 135.21052631578996) (15 , 146.15789473684268) (16 , 157.1052631578954) (17 , 168.0526315789481) (18 , 179.00000000000082) (19 , 189.94736842105357) (20 , 200.89473684210623) (21 , 211.84210526315897) (22 , 222.78947368421169) (23 , 233.7368421052644) (24 , 244.68421052631712) (25 , 255.63157894736983) (26 , 266.57894736842235) (27 , 277.52631578947484) (28 , 288.4736842105273) (29 , 299.4210526315798) (30 , 310.3684210526323) (31 , 321.3157894736848) (32 , 332.2631578947373) (33 , 343.21052631578976) (34 , 354.15789473684225) (35 , 365.10526315789474) (36 , 376.0526315789472) (37 , 386.9999999999997) (38 , 397.9473684210522) (39 , 408.8947368421047) (40 , 419.8421052631572) (41 , 430.78947368420967) (42 , 441.73684210526216) (43 , 452.6842105263146) (44 , 463.63157894736713) (45 , 474.5789473684196) (46 , 485.5263157894721) (47 , 496.4736842105246) (48 , 507.4210526315771) (49 , 518.3684210526292) (50 , 529.315789473681)};
        \addlegendentry{Linear Path}
        
        \addplot[ZZPlot] coordinates {(5 , 39.63157894736844) (6 , 52.05263157894745) (7 , 64.47368421052646) (8 , 76.89473684210547) (9 , 89.3157894736845) (10 , 101.73684210526352) (11 , 114.15789473684256) (12 , 126.5789473684216) (13 , 139.00000000000065) (14 , 151.42105263157973) (15 , 163.8421052631588) (16 , 176.26315789473787) (17 , 188.68421052631692) (18 , 201.105263157896) (19 , 213.5263157894751) (20 , 225.94736842105416) (21 , 238.36842105263324) (22 , 255.4210526315809) (23 , 272.68421052631754) (24 , 291.63157894737) (25 , 310.5789473684224) (26 , 329.52631578947484) (27 , 348.47368421052727) (28 , 367.4210526315797) (29 , 386.368421052632) (30 , 405.3157894736845) (31 , 424.263157894737) (32 , 443.2105263157895) (33 , 462.157894736842) (34 , 481.10526315789446) (35 , 500.0526315789469) (36 , 518.999999999999) (37 , 537.94736842105) (38 , 556.8947368421009) (39 , 575.8421052631519) (40 , 594.7894736842027) (41 , 614.1578947368323) (42 , 637.3157894736722) (43 , 661.1052631578784) (44 , 686.5789473684002) (45 , 712.052631578922) (46 , 737.5263157894434) (47 , 762.9999999999653) (48 , 788.4736842104869) (49 , 813.9473684210083) (50 , 839.4210526315297)};
        \addlegendentry{Zigzag Path}
        
        \addplot[EnolaDyPlot] coordinates {(5,43.789473684210556) (6,77.10526315789492) (7,90.94736842105281) (8,125.52631578947407) (9,165.78947368421123) (10,220.7368421052645) (11,231.36842105263318) (12,289.05263157894865) (13,360.15789473684276) (14,445.94736842105215) (15,487.31578947368325) (16,548.315789473681) (17,627.1052631578826) (18,685.1052631578788) (19,790.9999999999744) (20,885.1052631578629) (21,962.1052631578557) (22,1019.9999999999544) (23,1153.578947368374) (24,1257.2105263157398) (25,1371.4736842104799) (26,1433.999999999954) (27,1570.421052631531) (28,1660.1052631578507) (29,1820.7894736841672) (30,1923.2105263157432) (31,2117.9999999999454) (32,2211.315789473611) (33,2349.3684210525366) (34,2492.6842105261994) (35,2725.8421052629965) (36,2855.631578947188) (37,2901.6842105261194) (38,3158.5789473681834) (39,3417.3684210523497) (40,3817.315789473335) (41,3805.631578947002) (42,4010.684210525929) (43,4235.421052631243) (44,4443.68421052607) (45,4559.736842105097) (46,4880.000000000024) (47,5034.842105263283) (48,5293.894736842361) (49,5775.789473684695) (50,5961.631578948)};
        \addlegendentry{Enola}
        

        \addplot [DasAtomPlot]
        coordinates {%
        ( 5 , 0.0) ( 6 , 5.789473684210526) ( 7 , 5.421052631578947) ( 8 , 5.789473684210527) ( 9 , 18.052631578947363) ( 10 , 26.263157894736835) ( 11 , 28.1578947368421) ( 12 , 47.10526315789478) ( 13 , 50.78947368421058) ( 14 , 56.73684210526322) ( 15 , 75.05263157894753) ( 16 , 89.00000000000024) ( 17 , 91.73684210526343) ( 18 , 105.52631578947404) ( 19 , 124.15789473684254) ( 20 , 149.42105263157956) ( 21 , 144.73684210526372) ( 22 , 168.57894736842184) ( 23 , 206.8947368421061) ( 24 , 235.00000000000105) ( 25 , 261.3684210526327) ( 26 , 240.6315789473697) ( 27 , 238.0526315789488) ( 28 , 267.63157894736963) ( 29 , 270.15789473684345) ( 30 , 311.6315789473695) ( 31 , 312.2105263157905) ( 32 , 338.157894736843) ( 33 , 364.1578947368425) ( 34 , 370.4210526315796) ( 35 , 400.2105263157897) ( 36 , 452.3684210526312) ( 37 , 465.3157894736834) ( 38 , 502.42105263157777) ( 39 , 510.21052631578783) ( 40 , 538.6842105263129) ( 41 , 581.3157894736788) ( 42 , 578.1052631578889) ( 43 , 625.1578947368337) ( 44 , 659.4736842105139) ( 45 , 714.4736842105108) ( 46 , 716.2631578947185) ( 47 , 823.5263157894489) ( 48 , 785.4736842105008) ( 49 , 922.7368421052312) ( 50 , 777.5789473683996) };
        \addlegendentry{DasAtom}

        \addplot [AtomiquePlot]
        coordinates {%
        (5, 8.485281374238571) (6, 22.62741699796952) (7, 33.94112549695429) (8, 46.66904755831214) (9, 63.63961030678928) (10, 120.2081528017131) (11, 137.1787155501902) (12, 144.2497833620557) (13, 311.1269837220809) (14, 217.7888886054566) (15, 408.7077195258245) (16, 543.0580079512685) (17, 704.2783540618013) (18, 440.8204178980326) (19, 804.6875169902911) (20, 641.0646203948496) (21, 1169.922780631882) (22, 1295.909782382184) (23, 1274.249572225166) (24, 1483.133501246079) (25, 1095.034314086523) (26, 1266.914329801313) (27, 1454.445162982603) (28, 1863.966847759885) (29, 1641.236972429958) (30, 805.4487768290221) (31, 946.9955996617798) (32, 1067.253016881391) (33, 1926.28762911958) (34, 2064.889946205207) (35, 2018.067297058727) (36, 2730.840941253017) (37, 3109.686163329204) (38, 1865.068314237348) (39, 2061.304090523308) (40, 2672.13551277076) (41, 3046.909369690031) (42, 3067.204668844678) (43, 3203.166924883476) (44, 2729.503295849729) (45, 2900.648872898975) (46, 4280.724466660904) (47, 4572.311623471759) (48, 4456.916083837345) (49, 2892.719922863976) (50, 4000.288006637702) };
        \addlegendentry{Atomique}
        \end{axis}
    \end{tikzpicture}
        \subcaption{}
        \label{fig:movement_distances}
    \end{minipage}
    
    \begin{minipage}[c]{0.48\columnwidth}
        \centering
        \resizebox{!}{30ex}{\input{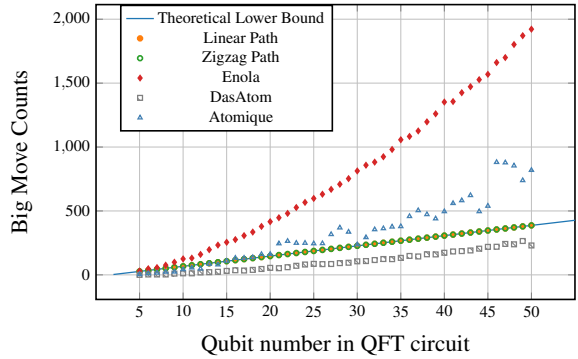}}
        \subcaption{}
        \label{fig:totoal-fid}
    \end{minipage}
    \begin{minipage}[c]{0.48\columnwidth}
        \centering
        \resizebox{!}{32ex}{\input{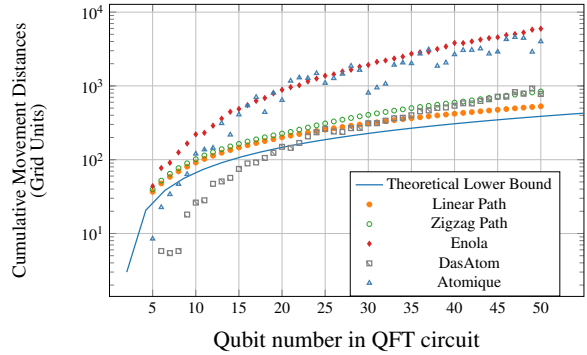}
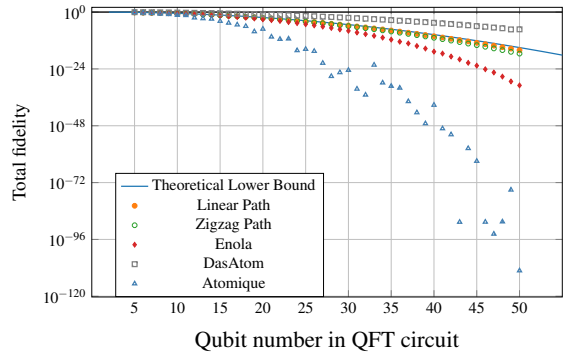
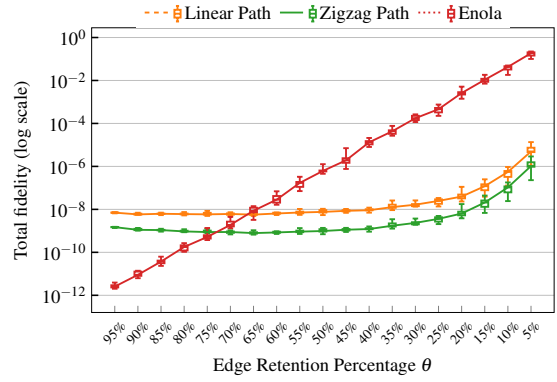}
        \subcaption{}
        \label{fig:qaoa}
    \end{minipage}
    \caption{
    (\subref{fig:movement_counts}) Big move counts for QFT circuits as a function of qubit count $n$, where the TLB is given by $4(2n-3)$.
    (\subref{fig:movement_distances}) Cumulative movement distances for different methods in QFT circuits as the qubit count $n$ increases, with the TLB given by $4(2n-3)d$.
    (\subref{fig:totoal-fid}) Total fidelity comparison for different methods across varying $n$.
    (\subref{fig:qaoa}) Total circuit fidelity for random MaxCut QAOA circuits as a function of the edge retention percentage \(\theta\). For each value of \(\theta\), 10 random circuits were generated and evaluated.
    }
\end{figure}
\subsubsection{Big Move Counts}

Fig.~\ref{fig:movement_counts} illustrates the big move counts for each compiler on QFT circuits of increasing qubit sizes.
\begin{itemize}
    \item \textbf{Linear \& Zigzag Paths} achieve TLB for big move counts, demonstrating their effectiveness in minimizing major atom movements.
    \item \textbf{Enola}'s compilation strategy results in exponentially higher big move counts than the other methods, due to its necessity to update atom placement after each Rydberg stage.  
    \item \textbf{Atomique} uses SWAP gates to reduce  atom movements, but its overall big move counts are higher than Linear and Zigzag Paths when $n\geq 15$.
    \item \textbf{DasAtom} achieves fewer big moves than the TLB by leveraging long-range interactions, which eliminate the need for certain atom movements. This is contingent upon the device supporting individually addressable Rydberg lasers.
\end{itemize}

\subsubsection{Cumulative Movement Distances}

The Cumulative Movement Distance metric quantifies the total distance travelled by atoms during the execution of QFT circuits. Fig.~\ref{fig:movement_distances} illustrates the cumulative movement distances, measured in grid units, for each method as a function of the number of qubits. Notably, each big move in the TLB covers a unit distance. In our methods, big moves also approximates a unit distance, accounting for offset moves. However, in the compared compilers, a single big move may cover much larger distance. It is worth stressing that DasAtom uses a much smaller unit distance ($3\mu$m instead of $15\mu$m).

\begin{itemize}
    \item \textbf{Linear \& Zigzag Paths} maintain minimal cumulative movement distances, closely approximating the TLB. While generally efficient, the Zigzag path shows a slight increase in cumulative movement distance as the qubit count grows, due to the need for additional offsets in the grid architecture.
    \item \textbf{Enola} exhibits the highest cumulative movement distances among all methods, like big move counts.
    \item \textbf{Atomique} demonstrates efficiency for smaller circuits (less than 10 qubits) but experiences an exponential decrease in performance with increasing qubit count.
    \item \textbf{DasAtom} achieves moderate cumulative movement distances and outperforms TLB for QFT circuits with up to approximately 20 qubits.
\end{itemize}


\subsubsection{Overall Fidelity Analysis}


Fig.~\ref{fig:totoal-fid} compares the overall fidelity achieved by each method, where the TLB is calculated based on four atom transfers per CZ gate.


\begin{itemize}
    \item \textbf{Linear \& Zigzag Paths} achieve the highest fidelity, as their movement efficiency minimises errors from atom transfers and decoherence.
    \item \textbf{Enola}'s methods show exponentially lower fidelity, reflecting their higher movement counts and distances.
    \item \textbf{Atomique} experiences fidelity degradation as circuit size increases, primarily due to its reliance on SWAP gates and ancilla qubits.
    \item \textbf{DasAtom} achieves higher overall fidelity than the TLB by leveraging individually addressable Rydberg lasers and long-range interactions. Since it employs local two-qubit operations without requiring global Rydberg illumination, it avoids global Rydberg excitation error. Effectively, this is equivalent to setting $f_\text{exc}=1$ for DasAtom in \eqref{eq:error}.
\end{itemize}

Since the first three terms in \eqref{eq:error} are identical for TLB, Line and Zigzag Paths, and Enola, the overall fidelity for these compilers is predominantly determined by big move counts and cumulative movement distances.

\smallskip
\noindent
\textbf{Quantifying Fidelity Gaps.}  
Table~\ref{tab:qft_comparison} highlights the stark fidelity gaps in larger QFT circuits. For $n=30$, Enola's fidelity is more than $180\times$ lower than our Linear Path, while Atomique's is over $10^{18}\times$ lower. At $n=50$, these gaps widen dramatically to factors of $10^{15}$ and $10^{94}$, respectively, emphasizing the critical role of movement minimization in large-scale circuits.

Meanwhile, the table suggests that with individually addressable Rydberg lasers, DasAtom's fidelity could surpass the TLB by more than  $64\times$ for QFT-30 and $10^6\times$ for QFT-50. This advantage persists even when excitation error is ignored for the TLB, where DasAtom's fidelity remains nearly $10\times$ higher for QFT-30 and over $200\times$ higher for QFT-50. This improvement stems from DasAtom's use of long-range interactions. However, we note that current state-of-the-art NAQC with individually addressed Rydberg lasers faces limitations in two-qubit gate fidelity \cite{graham_multi-qubit_2022}.

Finally, the table reveals that Enola and Atomique exhibit different performance behaviours depending on whether $f_\text{exc}=0.9975$ or $f_\text{exc}=1$. In the first case, Enola outperforms Atomique by factors of $10^{16}\times$ for QFT-30 and $10^{78}\times$ for QFT-50. However, when $f_\text{exc}=1$, Atomique surpasses Enola by $18.9\times$ for QFT-30 and $10^9\times$ for QFT-50, which is consistent with the experimental findings reported in \cite{huang2024dasatom}.

\begin{table}[ht]
    \centering
    \caption{Comparison on QFT-30 and QFT-50, where distances are measured in unit grid distance ($d=3\mu$m for DasAtom and $d=15\mu$m for all the other compilers). In Atomique, this distance is not computed by the code. Note that excitation error does not impact DasAtom's fidelity, as it utilizes local Rydberg lasers.}
    \label{tab:qft_comparison}
    \resizebox{\columnwidth}{!}{%
        \begin{tabular}{|l|l|c|c|c|r|c|c|c|c|c|c|}
    \toprule
    \multirow{2}{*}{\textbf{Circuit}} & \multirow{2}{*}{\textbf{Method}} & \multirow{2}{*}{$g_2$} & \multirow{2}{*}{$|Q|$} & \multirow{2}{*}{$S$} & \multirow{2}{*}{$N_{\text{trans}}$} & \multicolumn{2}{c|}{\textbf{Big Move}} & \multicolumn{2}{c|}{\textbf{Offset Move}} & \multicolumn{2}{c|}{\textbf{Overall Fidelity}} \\
    & & & & & & count & distance & count & distance & $f_{\text{exc}}=0.9975$ & $f_{\text{exc}}=1$ \\ 
    \midrule
    \multirow{6}{*}{QFT-30} 
     & TLB     & 870 & 30 & 114  & 3478 &  227 & 227 & 0 & 0 & $5.7\times10^{-5}$ & $3.8\times10^{-4}$\\
     & Linear  & 870 & 30 & 114  & 3478 &  227 & 226.84 & 681 & 83.53 & $2.1\times10^{-6}$& $1.4\times10^{-5}$\\
     & Zigzag  & 870 & 30 & 114  & 3478 &  227 & 226.84 & 1413 & 178.47 & $1.7\times10^{-6}$& $1.1\times10^{-5}$\\
     & Enola   & 870 & 30 & 114  & 3478 &  813 & 1253.37 & 4681 & 669.84 & $1.2\times10^{-8}$ & $7.9\times 10^{-7}$\\
     & Atomique & 954 & 43 & 466  & 0    &  237 & 805.45 & 229 & \textemdash & $2.9\times10^{-25}$ & $1.5\times10^{-5}$ \\
     & DasAtom & 870 & 30 & 786  & 460  &  106 & 243.21 & 640 & 68.42 & $3.7\times10^{-3}$ & $3.7\times10^{-3}$ \\
    \midrule
    \multirow{6}{*}{QFT-50} 
     & TLB     & 2450 & 50 & 194  & 9798 & 387 & 387 & 0 & 0 & $1.4\times10^{-15}$ & $2.4\times10^{-10}$\\
     & Linear  & 2450 & 50 & 194  & 9798 & 387 & 386.84 & 1161 & 142.47 & $8.3\times10^{-17}$& $1.4\times10^{-11}$\\
     & Zigzag  & 2450 & 50 & 194  & 9798 & 387 & 386.84 & 3501 & 452.58 & $2.9\times10^{-18}$& $5.0\times10^{-13}$\\
     & Enola   & 2450 & 50 & 194  & 9798 & 1922 & 4050.79 & 13318 & 1910.84 & $1.1\times10^{-31}$ & $1.8\times 10^{-26}$\\
     & Atomique & 2669 & 79 & 1145 & 0    & 819 & 4000.29 & 326 & \textemdash & $6.6\times10^{-110}$ & $2.2\times10^{-17}$ \\ 
     & DasAtom & 2450 & 50 & 1554 & 1264 & 221 & 567.16 & 1701 & 182.05  & $5.2\times10^{-8}$ & $5.2\times10^{-8}$ \\
    \bottomrule
\end{tabular}
    }
\end{table}
\subsection{Extension to Circuits with Fewer Interactions}
\label{sec:sparser}

While the QFT circuit is characterized by a fully connected interaction graph, many practical quantum circuits do not require full connectivity. To evaluate the adaptability of our methods, we extend our analysis to MaxCut QAOA circuits \cite{farhi2014quantum} with reduced connectivity. 

\subsubsection{Procedure for Generating Random MaxCut QAOA Circuits}  
To generate a random $n$-qubit MaxCut QAOA circuit \(C\), the following steps are performed:  

\begin{enumerate}  
    \item \textbf{Initialization:} 
    Define a complete graph \(G = (V, E)\), where \(V = \{0, 1, \dots, n-1\}\) represents the qubits, and \(E = \{(i, j) \mid 0 \leq i < j < n\}\) is the set of edges connecting all pairs of qubits. 
    
    \item \textbf{Edge Sampling:} For each edge \((i, j) \in E\), decide whether to include the edge in a subset \(E_\theta \subseteq E\). An edge is included in \(E_\theta\) with a probability of \(\theta \%\), where \(\theta\) is the specified percentage of edges to retain.  

    \item \textbf{Insertion of \(CZ\) Gates:} Create an empty circuit \(C\). For each edge \((i, j) \in E_\theta\), insert a gate \(CZ(i, j)\) into the circuit \(C\).   
\end{enumerate}  
This procedure ensures the systematic generation of random MaxCut QAOA circuits, with the density of the \(CZ\) gates determined by the parameter \(\theta\). The randomness in edge selection allows for the exploration of circuit behaviour across varying levels of sparsity.  

To compile any such random circuit \(C\), we first consider the $n$-qubit circuit \(C_{QFT}\), which is a simplified version of the QFT-$n$ circuit (cf. Fig.~\ref{fig:qft-5-par}) obtained by replacing all \(CP\) gates directly with \(CZ\) gates. Our Linear and Zigzag Paths methods can be directly applied to \(C_{QFT}\). To adapt these methods for $C$, we exploit the fact that the order of \(CZ\) gates in $C$ is unimportant. We arrange the $CZ$ gates of $C$ into layers matching those of \(C_\text{QFT}\), though some \(CZ\) gates potentially missing in certain layers. The compilation of \(C\) then proceeds by simulating the compilation of \(C_\text{QFT}\), but with a modified meet-interact-swap operation. Specifically, for each m-stage $k$: (1) we apply the meet and interact steps to the $CZ$ gates in $C_k$ (the subset of $CZ$ gates in $C$ that appear in the $k$-th m-stage of \(C_\text{QFT}\), denoted by $C'_k$); (2) we apply the meet step to the $CZ$ gates present in the $C'_k$ but not in $C_k$; and (3) we perform the swap step for all $CZ$ gates in $C'_k$. This effectively splits each parallel atom ``meet'' move from the \(C_\text{QFT}\) compilation into two ``meet'' moves for $C$, ensuring consistent qubit mapping transitions with QFT-$n$ (cf. Fig.~\ref{fig:qft5-lnn}).


Fig.~\ref{fig:qaoa} shows that the total circuit fidelity as a function of $\theta$ under our methods and Enola. It suggests that if $\theta\geq 70$, our methods perform better than Enola. Note that Enola also exploits gate commutability for these circuits.

\section{Conclusion} \label{sec:con}
This work introduced optimal compilation strategies for QFT circuits on NAQC platforms. Our proposed Linear and Zigzag Path strategies achieve theoretical lower bounds in atom movement counts while maintaining high circuit fidelity. Comprehensive evaluations demonstrate that these methods exponentially outperform state-of-the-art DPQA approaches in movement efficiency and overall fidelity. This exponential fidelity gap highlights the significant advantage of minimizing atom movements and optimizing qubit routing on NAQC systems, underscoring the critical importance of movement-aware compilation for preserving the overall circuit fidelity in scalable quantum computing.

While our compilation strategies are currently tailored for specific circuits like QFT, they reveal a huge fidelity gap compared to existing compilers, highlighting the potential for substantial improvements in general NAQC compilation. We expect that this work will motivate the development of more advanced NAQC compilers. Furthermore, DasAtom's performance, exceeding even the theoretical lower bound, suggests that architectural innovations such as individually addressable Rydberg lasers or zoned architectures \cite{Bluvstein_2023} are crucial for NAQC scalability.

\bibliography{clean}

\section*{Data Availability}
The datasets generated and analyzed during this study are available in the repository on GitHub: \href{https://github.com/gcc-bug/qft_atom}{https://github.com/gcc-bug/qft\_atom}. The final processed data used in this study are included in the data directory.

\section*{Acknowledgements}
Work partially supported by the National Science Foundation of China (12471437) and the Beijing Nova Program (20220484128).

\section*{Author contributions statement}
SL conceived the initial research question and proposed the core ideas. DG developed the methodology, performed the experiments, and drafted the manuscript. YL and SY contributed to discussions and all authors contributed to manuscript revisions.

\section*{Additional information}
The authors declare no competing interests.

\end{document}